\newcommand{\algorithmicbreak}{\textbf{break}}
\newcommand{\Break}{\algorithmicbreak}
\newcommand{\pclass}{\ensuremath{\mathcal{P}}}
\newcommand{\jset}{\ensuremath{\mathcal{J}}}
\newcommand{\jhset}{\ensuremath{\mathcal{J}_\text{heavy}}}
\newcommand{\jmset}{\ensuremath{\mathcal{J}_\text{medium}}}
\newcommand{\jlset}{\ensuremath{\mathcal{J}_\text{light}}}
\newcommand{\jsset}{\ensuremath{\mathcal{J}_\text{scheduled}}}
\newcommand{\np}{\ensuremath{\mathcal{N\!P}}}
\newcommand{\pma}{\ensuremath{\text{P}}{}}
\newcommand{\qma}{\ensuremath{\text{Q}}{}}
\newcommand{\res}{\ensuremath{\text{\emph{res}}}{}}
\newcommand{\cmax}{\ensuremath{C_\text{max}}{}}
\newcommand{\opt}{\ensuremath{\text{\emph{OPT}}}{}}
\title{A log-linear $(2+5/6)$-approximation algorithm for parallel machine scheduling with a single orthogonal resource\thanks{Preprint of the paper accepted at the 27th International European Conference on Parallel and Distributed Computing (Euro-Par 2021), Lisbon,
Portugal, 2021, DOI: \texttt{10.1007/978-3-030-85665-6\_10}}}
\titlerunning{A log-linear $(2+5/6)$-approximation algorithm...}
\author{Adrian Naruszko \and
Bartłomiej Przybylski \and
Krzysztof Rządca
}
\authorrunning{A. Naruszko et al.}
\institute{%
Institute of Informatics, University of Warsaw\\
Warsaw, Poland\\
\email{an371233@students.mimuw.edu.pl}, \email{\{bap,krzadca\}@mimuw.edu.pl}
}
\begin{document}

\maketitle

\begin{abstract}
As the gap between compute and I/O performance tends to grow, modern High-Performance Computing (HPC) architectures include a new resource type: an intermediate persistent fast memory layer, called burst buffers.
This is just one of many kinds of renewable resources which are orthogonal to the processors themselves, such as network bandwidth or software licenses. Ignoring orthogonal resources while making scheduling decisions just for processors may lead to unplanned delays of jobs of which resource requirements cannot be immediately satisfied.
We focus on a classic problem of makespan minimization for parallel-machine scheduling of independent sequential jobs with additional requirements on the amount of a single renewable orthogonal resource. We present an easily-implementable log-linear algorithm that we prove is $2\frac56$-approximation. In simulation experiments, we compare our algorithm to standard greedy list-scheduling heuristics and show that, compared to LPT, resource-based algorithms generate significantly shorter schedules.

\keywords{Parallel machines \and Orthogonal resource \and Burst buffers \and Rigid jobs \and Approximation algorithm}
\end{abstract}

\section{Introduction}

In a simplified model of parallel-machine scheduling, independent jobs can be freely assigned to processors if only at most one job is executed by each processor at any moment. Thus, the only resource to be allocated is the set of processors. However, this simple model does not fully reflect the real-life challenges. In practice, jobs may require additional resources to be successfully executed. These resources may include---among others---fast off-node memory, power, software licenses, or network bandwidth. All of these example resources are renewable, i.e. once a job completes it returns the claimed resources to the common pool (in contrast to consumable resources such as time or fuel).

Some of the resources may be orthogonal which means that they are allocated independently of other resources. For example, in standard High-Performance Computing (HPC) scheduling, node memory is not orthogonal as a job claims all the memory of the node on which it runs. On the other hand, in cloud computing, node memory is an orthogonal resource, as it is partitioned among containers or virtual machines concurrently running at the node. Thus, node memory (and also network bandwidth) is managed in a similar manner to the processors.

The results presented in this paper are directly inspired by the practical problem of parallel-machine scheduling of jobs in HPC centers with an orthogonal resource in a form of a burst buffer. A burst buffer is a fast persistent NVRAM memory layer logically placed between the operational memory of a node and the slow external file system. Burst buffers are shared by all the jobs running in a cluster and thus they are orthogonal to processors. They can be used as a cache for I/O operations or as a storage for intermediate results in scientific workflows.

The main contribution of this paper is a log-linear $2\frac56$-approximation algorithm for a parallel-machine scheduling problem with independent, sequential and rigid jobs, a single orthogonal resource and makespan as the objective. We thus directly improve the classic $\left(3 - \frac3m\right)$-approximation algorithm by Garey and Graham \cite{Garey1975}. Although a $(2 + \varepsilon)$-approximation algorithm \cite{NiemeierWiese2013} and an asymptotic FPTAS \cite{Jansen2019} are known, their time and implementation complexities are considerable. Our algorithm can be easily implemented and it runs in log-linear time. Additionally, it can be combined with known fast heuristics, thus providing good average-case results with a guarantee on the worst case.

The paper is organized as follows. In Sec.~\ref{sec:pd}, we define the analysed problem and review the related work. Then, in Sec.~\ref{sec:aa} we present a $2\frac56$-approximation algorithm and prove its correctness. In Sec.~\ref{sec:se} we simulate our algorithm, compare its performance to known heuristics, and discuss its possible extensions. Finally, in Sec.~\ref{sec:co}, we make general conclusions.

\section{Problem definition and related work}
\label{sec:pd}

We are given a set of $m$ parallel identical machines, a set of $n$ non-preemptable jobs $\jset = \{1, 2, \dots, n\},$ and a single resource of integer capacity $R.$ Each job $i \in \jset$ is described by its processing time $p_i$ and a required amount of the resource $R_i \leq R.$ We use the classic rigid job model \cite{Feitelson97} in which the processing time does not depend on the amount of the resource assigned. Our aim is to minimize the time needed to process all the jobs (or, in other words, the maximum completion time). Based on the three-field notation introduced in \cite{Graham1979} and then extended in \cite{Blazewicz1983}, we can denote the considered problem as $\pma|\res 1{\cdot}{\cdot}|\cmax.$ Here, the three values after the `\res' keyword determine: (1) the number of resources (one in this case); (2) the total amount of each resource (arbitrary in this  case); (3) the maximum resource requirement of a single job (arbitrary in this case). This problem is \np-Hard as a generalization of $\pma||\cmax$ \cite{GareyJohnson1978}. However, both the problems have been deeply analyzed in the literature. Moreover, different machine, job, and resource characteristics have been considered in the context of various objective functions. We refer the reader to \cite{Edis2013} for the most recent review on resource-constrained scheduling.

In case of the variant without additional resources -- $\pma||\cmax$ -- it was proved that the LPT strategy leads to a $\left(\frac43 - \frac1{3m}\right)$-approximation algorithm \cite{Graham1969} while any arbitrary list strategy provides a $\left(2 - \frac1m\right)$-approximation \cite{Graham1966}. A classic result on polynomial-time approximation scheme (PTAS) for $\pma||\cmax$ was presented in \cite{HochbaumSchmoys1987}. In general, an efficient polynomial-time approximation scheme (EPTAS) for the $\qma||\cmax$ problem (with uniform machines) is known \cite{Jansen2010,Jansen2016}. As the $\pma||\cmax$ problem is strongly \np-Hard, there exists no fully polynomial-time approximation scheme unless $\pclass=\np.$

When orthogonal resources are introduced, the upper-bounds increase. Given $s$ additional resources and more than two machines, any arbitrary list strategy leads to a $\left(s + 2 - \frac{2s + 1}m\right)$-approximation algorithm \cite{Garey1975} and this general bound is tight for $m > 2s + 1.$ For $s = 1,$ i.e. in the case of a single resource, this ratio becomes $3 - \frac3m.$ We also get $\lim_{m \to \infty}\left(3 - \frac3m\right) = 3.$

For the $\pma|\res 1{\cdot}1, p_i = 1|\cmax$ problem with unit processing times, binary resource requirements and an arbitrary amount of the resource, the optimal \cmax{} can be found in constant time \cite{Kovalyov1998}. The more general $\pma|\res 1{\cdot}1, r_i, p_i = 1|\cmax$ problem with ready times can be solved in linear time \cite{Blazewicz1978}, while the $\pma2|\res 1{\cdot}{\cdot}, r_i, p_i = 1|\cmax$ problem with just two machines and no limits on resource requirements of a single job is already \np-Hard \cite{Blazewicz1986}.
 
A number of heuristic and approximation algorithms has been developed for the $\pma|\res 1{\cdot}{\cdot}|\cmax$ problem and its close variants. A polynomial-time $\frac43$-approximation algorithm for the $\pma|\res 1{\cdot}{\cdot}, p_i = 1|\cmax$ problem was shown in \cite{Krause1973}. On the other hand, a $(3.5 + \varepsilon)$-approximation algorithm is known for the $\pma|\res 1{\cdot}{\cdot}, \emph{Int}|\cmax$ problem \cite{Kellerer2008}, i.e. for resource-dependent job processing times. In \cite{NiemeierWiese2013}, a $(2 + \varepsilon)$-approximation algorithm for the $\pma|\res 1{\cdot}{\cdot}|\cmax$ problem is presented. This algorithm can be transformed into a PTAS if the number of machines or the number of different resource requirements is upper-bounded by a constant. Further, an asymptotic FPTAS for the $\pma|\res 1{\cdot}{\cdot}|\cmax$ problem was shown \cite{Jansen2019}. Although the latter results have a great theoretical impact on the problem considered in this paper, the complexity of the obtained algorithms prevents them from being efficiently implemented.

\section{Approximation algorithm}
\label{sec:aa}

In this section, we present a log-linear $2\frac56$-approximation algorithm for the $\pma|\res 1{\cdot}{\cdot}|\cmax$ problem. In order to make our reasoning easier to follow, we normalize all the resource requirements. In particular, for each job $i \in \jset$ we use its relative resource consumption $r_i \coloneqq R_i/R \in [0, 1].$ Thus, for any set of jobs $J \subseteq \jset$ executed at the same moment it must hold that $\sum_{i \in J} r_i \leq 1.$ We also denote the length of an optimal schedule with \opt.

Before we present algorithm details, we introduce some definitions and facts. We will say that job $i \in \jset$ is \emph{light} if $r_i \leq \frac13.$ \emph{medium} if $\frac13 < r_i \leq \frac12,$ and \emph{heavy} if $r_i > \frac12.$ Thus, each job falls into exactly one of the three disjoint sets, denoted by $\jlset,$ $\jmset,$ $\jhset,$ respectively. Note that --- whatever the job resource requirements are --- not more than one heavy and one medium, or two medium jobs can be executed simultaneously.

Given a subset of jobs $J \subseteq \jset,$ we define its total resource consumption as $R(J) \coloneqq \sum_{i \in J} r_i.$ We say that a set $J$ is \emph{satisfied} with $\theta$ resources, if for each subset $J' \subseteq J$ such that $|J'| \leq m$ we have $R(J') \leq \theta.$ In such a case, we write $R_m(J) \leq \theta.$

The set of all the scheduled jobs will be denoted by $\jsset.$ For a scheduled job $i \in \jsset,$ we denote its start and completion times by $S_i$ and $C_i,$ respectively. We say that job $i$ was \emph{executed} before moment $t$ if $C_i \leq t,$ is being executed at moment $t$ if $S_i \leq t < C_i,$ and will be executed after moment $t$ if $S_i > t.$ The set of jobs executed at moment $t$ will be denoted by $\jset(t)$ and the total resource consumption of jobs in set $\jset(t)$ by $R(t).$ We present a brief summary of the notation in Tab.~\ref{table:notation}.

\begin{table}
\caption{Summary of the notation}
\label{table:notation}
\begin{tabularx}{\textwidth}{lX}
\toprule
Symbol & Meaning\\
\midrule
$\jset$ & The set of all the jobs\\
$\jset(t)$ & The set of all scheduled jobs $i$ such that $S_i \leq t < C_i$\\
$J \subseteq \jset$ & Subset of all the jobs\\
$R(J)$ & Total resource requirement of the jobs in $J$\\
$R_m(J)$ & Total resource requirement of $m$ most consumable jobs in $J$\\
$R(t)$ & Total resource requirement of jobs in $\jset(t)$\\
\bottomrule
\end{tabularx}
\end{table}

Let us start with the following lemmas, which are the essence of our further reasoning. In the algorithm, we will make sure that the assumptions of these lemmas are met, so we can use them to prove the general properties of the solution.

\begin{lemma}
\label{lem:rta}
If $R(t) \geq \frac23$ for all $t \in [t_a, t_b),$ then $t_b - t_a \leq \frac32\opt.$
\end{lemma}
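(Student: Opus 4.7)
The plan is to prove this lemma by a double counting (or area) argument on the quantity $\int R(t)\,dt$, which represents the total ``resource-work'' used by the schedule.

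First, I would establish a lower bound on the integrated resource consumption over the interval of interest. Since $R(t) \geq \tfrac{2}{3}$ for every $t \in [t_a, t_b)$, we immediately get
\[
\int_{t_a}^{t_b} R(t)\,dt \;\geq\; \tfrac{2}{3}\,(t_b - t_a).
\]
Next, I would give an upper bound on the same integral in terms of the instance data. Because each scheduled job $i$ contributes $r_i$ to $R(t)$ exactly while it is being processed, and contributes to $R(t)$ on $[t_a,t_b)$ for at most $p_i$ time units, we have
\[
\int_{t_a}^{t_b} R(t)\,dt \;\leq\; \sum_{i \in \jset} r_i\, p_i.
\]

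The final ingredient is the standard area lower bound on $\opt$ coming from the resource constraint. In any feasible schedule (in particular an optimal one), at every moment the sum of relative resource requirements of the jobs in progress is at most $1$, and the schedule lasts $\opt$ time units. Integrating over the whole optimal schedule gives
\[
\sum_{i \in \jset} r_i\, p_i \;\leq\; 1 \cdot \opt \;=\; \opt.
\]
Chaining the three inequalities yields $\tfrac{2}{3}(t_b - t_a) \leq \opt$, i.e.\ $t_b - t_a \leq \tfrac{3}{2}\opt$, which is exactly the claim.

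There is no real obstacle here: the only point that needs a little care is the justification that the jobs actually contributing to $R(t)$ on $[t_a,t_b)$ are jobs from $\jset$ with processing times $p_i$, so that $\sum_i r_i p_i$ is a valid upper bound on the integral; and that the resource capacity (normalized to $1$) is never exceeded in the optimal schedule, so $\sum_i r_i p_i \leq \opt$. Both follow directly from the problem definition, so the proof reduces to the chain of inequalities above.
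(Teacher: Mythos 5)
Your proof is correct and follows exactly the same route as the paper's: the chain $\tfrac23(t_b-t_a) \leq \int_{t_a}^{t_b} R(t)\,dt \leq \sum_{i\in\jset} r_i p_i \leq \opt$, where the last inequality is the standard area bound from the normalized resource capacity. Nothing to add.
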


\begin{proof}
It holds that $\opt \geq \sum_{i \in \jset} r_i\cdot p_i,$ as the optimal schedule length is lower-bounded by a total amount of resources consumed by all jobs in time. As $R(t) \geq \frac23,$ we obtain $$\frac23\cdot\left(t_b - t_a\right) \leq \int_{t_a}^{t_b}R(t) \,\textup{d}t \leq \sum_{i \in \jset} r_i\cdot p_i \leq \opt,$$ and thus $t_b - t_a \leq \frac32\opt.$
\end{proof}

\begin{lemma}
\label{lem:rtb}
Let $J_1, J_2 \subseteq \jset.$ If $R_m(J_1) < \theta_1$ and $R_m(J_2) < \theta_2,$ then $R_m(J_1 \cup J_2) < \theta_1 + \theta_2.$
\end{lemma}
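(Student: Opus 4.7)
The plan is to unfold the definition of $R_m$ and reduce the claim to a subset-splitting argument. Recall that $R_m(J) < \theta$ means $R(J') < \theta$ for every $J' \subseteq J$ with $|J'| \leq m$ (since $J$ is finite, this supremum is attained). So to bound $R_m(J_1 \cup J_2)$, I would fix an arbitrary $J' \subseteq J_1 \cup J_2$ with $|J'| \leq m$ and bound $R(J')$ by $\theta_1 + \theta_2$.

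The key move is to partition $J'$ along membership in $J_1$: set $J'_1 \coloneqq J' \cap J_1$ and $J'_2 \coloneqq J' \setminus J_1$. Then $J'_2 \subseteq J_2$ because every element of $J'$ not in $J_1$ must lie in $J_2$, and $J' = J'_1 \sqcup J'_2$ is a disjoint union, so $R(J') = R(J'_1) + R(J'_2)$. Both $J'_1$ and $J'_2$ have cardinality at most $|J'| \leq m$, so the hypotheses apply directly, yielding $R(J'_1) < \theta_1$ and $R(J'_2) < \theta_2$. Adding the two bounds gives $R(J') < \theta_1 + \theta_2$, and since $J'$ was arbitrary, $R_m(J_1 \cup J_2) < \theta_1 + \theta_2$.

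The one subtlety worth flagging is that $J_1$ and $J_2$ are not assumed disjoint, so the naive split $J' = (J' \cap J_1) \cup (J' \cap J_2)$ would double-count elements of $J_1 \cap J_2$; defining $J'_2$ as the complement $J' \setminus J_1$ sidesteps this and is what makes $R(J'_1) + R(J'_2) = R(J')$ rather than an upper bound. Beyond that observation, the argument is an immediate unpacking of the definitions, so I do not anticipate any real obstacle.
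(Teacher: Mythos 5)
Your proof is correct and follows essentially the same route as the paper's: both establish the subadditivity $R_m(J_1 \cup J_2) \leq R_m(J_1) + R_m(J_2)$ by observing that any witnessing subset of $J_1 \cup J_2$ of size at most $m$ splits into a part inside $J_1$ and a part inside $J_2$, each itself of size at most $m$. Your version is simply a more explicit write-up --- in particular, your use of $J' \setminus J_1$ to avoid double-counting elements of $J_1 \cap J_2$ makes precise a point the paper's one-line argument leaves implicit.
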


\begin{proof}
Let us notice that, given any set of jobs $J,$ the value of $R_m(J)$ is determined by $m$ most resource-consuming jobs in $J.$ Thus, the value of $R_{m}(J_1 \cup J_2)$ is determined by $m$ most resource-consuming jobs from sets $J_1$ and $J_2.$ As a consequence, it holds that $R_{m}(J_1 \cup J_2) \leq R_m(J_1) + R_m(J_2) < \theta_1 + \theta_2.$
\end{proof}

\subsection{The idea}

Our approximation algorithm consists of four separate steps that lead to a resulting schedule $T.$ In each of the steps we generate a part of a schedule denoted by $T_1.$ $T_2.$ $T_3$ and $T_4,$ respectively. A general structure of schedule $T$ is presented in Fig.~\ref{fig:schedule-T}. Note that some time moments are marked by $t_1$, $t_2$, $t_3$ and $t_4$. These values are found while the algorithm is being executed. However, our algorithm guarantees that $0 \leq t_2 \leq t_1 \leq t_3 \leq t_4.$ We also state that $t_3 \leq \frac32\opt$ and that $t_4 - t_3 \leq \frac43\opt.$ As it is so, we get $\cmax(T) \leq  \frac32\opt + \frac43\opt = 2\frac56\opt.$ The algorithm is structured as follows.
\begin{enumerate}[label={\textbf{Step \arabic*.}}, leftmargin=1.4cm]
    \item Schedule all heavy jobs on the first machine in the weakly decreasing order of their resource requirements. As a consequence, it holds that $t_1 = \sum_{i \in \jhset} p_i$ and thus $t_1 \leq \opt.$ 
    \item Schedule selected light jobs starting from moment $0$ in such a way that $t_2 \leq t_1$ and $R(t) \geq \frac23$ for all $t < t_2.$
    \item Schedule all medium jobs and selected not-yet scheduled light jobs starting from moment $t_2$ in such a way that $t_3 \leq \frac32\opt.$
    \item Schedule all the remaining jobs using an LPT list strategy, starting from moment $t_3.$ In Steps (2) and (3), we selected jobs to be scheduled in such a way that now all non-scheduled jobs form a set $J$ such that $R_m(J) \leq 1.$ Thus, $t_4 - t_3 \leq \frac43\opt.$
\end{enumerate}

\begin{figure}[t]
    \includegraphics[width=\textwidth]{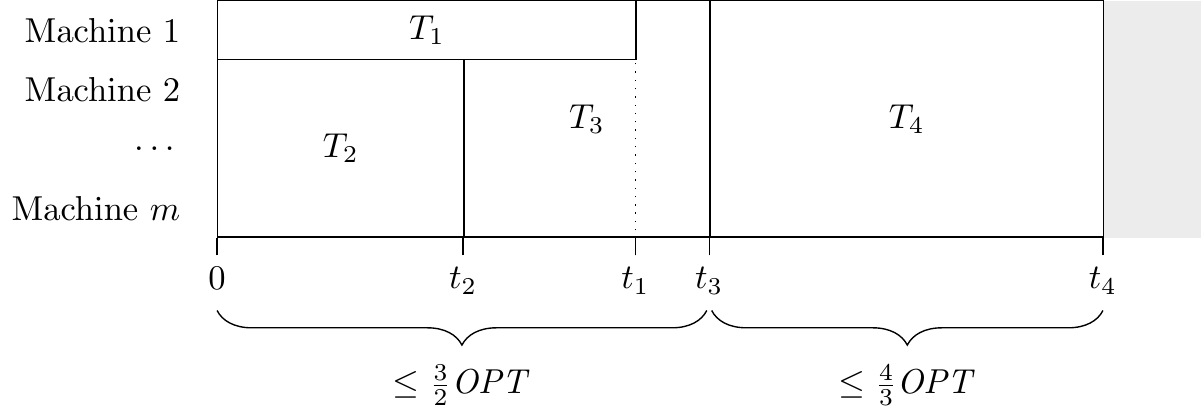}
    \caption{The structure of a resulting schedule $T.$} \label{fig:schedule-T}
\end{figure}

As Step~(1) is self-explanatory, we will not discuss it in details. However, we present its pseudocode in Alg.~\ref{alg:step-1}. Steps~(2) and (3) share a subroutine \textsc{Schedule-2/3} presented in Alg.~\ref{alg:schedule-23}. Given a set $J$ of jobs to be scheduled and a starting time $t_s,$ the procedure schedules some (perhaps none) of the jobs from $J$ and returns $t_c$ such that $R(t) \geq \frac23$ for all $t \in [t_s, t_c).$ If $t_s = t_c$, then $[t_s, t_c) = \emptyset$ and the statement remains true. Note that the time complexity of the \textsc{Schedule-2/3} subroutine is $O\left(|J|\log|J|\right).$

\begingroup
\setcounter{algorithm}{0}
\renewcommand\thealgorithm{\arabic{algorithm}A} 
\begin{algorithm}[h]
    \caption{Approximation algorithm --- Step 1 out of 4}
    \label{alg:step-1}
    \begin{algorithmic}
    \State {\textsc{Schedule} all the jobs from the $\jhset$ set in a weakly decreasing order of $r_i,$ on the first machine, starting from moment $0$}
    \State $t_1 \gets \sum_{i \in \jhset} p_i$
    \end{algorithmic}
\end{algorithm}
\endgroup

\begin{algorithm}[tb]
    \caption{}
    \label{alg:schedule-23}
    \begin{algorithmic}
    \Procedure{Schedule-2/3}{$J,$ $t_s$}
        \State {$t_c \gets t_s$}
        \State {\textsc{Sort} $J$ in a weakly decreasing order of $r_i$}

        \For {$i \in J$}
            \State {$t_c \gets \min\{ t \colon t \geq t_c \text{ and } R(t) < \frac23 \}$}
            \If {job $i$ can be feasibly scheduled in the $[t_c, t_c + p_i)$ interval}
                \State {\textsc{Schedule} job $i$ in the $[t_c, t_c + p_i)$ interval on any free machine}
            \Else {}
                \Break
            \EndIf
        \EndFor

        \State \Return $\min\{ t \colon t \geq t_c \text{ and } R(t) < \frac23 \}$
    \EndProcedure
    \end{algorithmic}
\end{algorithm}

\subsection{The analysis of the algorithm}

In Step~(2) of the algorithm, we partially fill the $T_2$ block of the schedule in such a way that at least $\frac23$ of the resource is consumed at any point in the $[0, t_2)$ interval. As we expect $t_2$ to be less or equal to $t_1,$ this step does not apply if $t_1 = 0$ or, equivalently, if $\jhset = \emptyset.$ The procedure itself is presented in Alg.~\ref{alg:step-2} and its time complexity is $O(|\jlset|\log|\jlset|).$ Note that when it is finished, there might exist one or more scheduled jobs $i \in \jlset$ such that $C_i > t_2,$ i.e. $\jset(t_2) \cap \jlset$ might be a non-empty set.

\begingroup
\setcounter{algorithm}{0}
\renewcommand\thealgorithm{\arabic{algorithm}B} 
\begin{algorithm}[tb]
    \caption{Approximation algorithm --- Step 2 out of 4}
    \label{alg:step-2}
    \begin{algorithmic}
    \State $t_2 \gets 0$
    \If {$\jhset \neq \emptyset$}
        \State $t_c \gets \textsc{Schedule-2/3}(\jlset, 0)$
        \State $t_2 \gets \min\{t_1, t_c\}$
        \If {$t_1 = t_2$}
            \State {\textsc{Unschedule} jobs $i \in \jlset$ for which $S_i \geq t_2$}
        \EndIf
    \EndIf
    \end{algorithmic}
\end{algorithm}
\endgroup

\begin{proposition}
\label{prop:step-2}
Let $t_1 > 0.$ After Step~(2) is finished, the following statements hold:
\begin{enumerate}[label=(\alph*)]
    \item If $t_1 = t_2,$ then $R(\jset(t_2) \cap \jlset) < \frac13.$
    \item If $t_1 > t_2,$ then $R(\jset(t_2) \cap \jlset) < \frac16$ and $R_m\left(\jlset \setminus \jsset\right) < \frac13.$
\end{enumerate}
\end{proposition}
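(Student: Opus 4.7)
The plan is to prove both parts by exploiting the invariant that \textsc{Schedule-2/3} maintains---$R(t) \geq \frac{2}{3}$ on $[t_s, t_c)$---together with the decreasing-$r_i$ placement order and the fact that during $[0, t_1)$ a heavy job continuously occupies machine~$1$ with $r_h > \frac{1}{2}$.

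For case~(b), where $t_2 = t_c < t_1$, the returned value satisfies $R(t_c) < \frac{2}{3}$ by construction and a heavy job is still active there, so subtracting $r_h > \frac{1}{2}$ immediately yields $R(\jset(t_2) \cap \jlset) < \frac{1}{6}$. For the second inequality I would split on whether \textsc{Schedule-2/3} exhausted $\jlset$ (in which case $\jlset \setminus \jsset = \emptyset$ and the bound is trivial) or broke on some job~$i$. In the latter case I would first show that the break cannot stem from the resource constraint: within $[t_c, t_c + p_i)$ no further light job has been placed by the procedure, so the light contribution can only decrease, and because heavy jobs are sequenced in weakly decreasing $r$ the heavy share cannot grow either; hence $R(t) \leq R(t_c) < \frac{2}{3}$ and $R(t) + r_i < 1$. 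The break must therefore be a machine-saturation event, meaning that exactly $m-1$ light jobs fill the non-heavy machines at $t_c$ with total resource $<\frac{1}{6}$. By pigeonhole their smallest resource value is below $\frac{1}{6(m-1)}$, and since placements are in decreasing~$r$ every unscheduled light job is at most this small; summing the $m$ largest gives $R_m(\jlset \setminus \jsset) < \frac{m}{6(m-1)} \leq \frac{1}{3}$.

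For case~(a), where $t_2 = t_1 \leq t_c$, I would focus on the last light job $j$ placed with $S_j < t_1$ and decompose $\jset(t_2) \cap \jlset$ into (i) the light jobs already active at $S_j^-$ and (ii) the light jobs placed at $t_c' = S_j$, including $j$ itself. The invariant $R(S_j^-) < \frac{2}{3}$ minus the heavy share bounds (i) by $<\frac{1}{6}$, and the analogous sub-$\frac{2}{3}$ check just before $j$ in the placement loop similarly controls the cumulative pre-$j$ contribution to (ii). The main obstacle will be tightening the combination of these bounds together with $r_j \leq \frac{1}{3}$ into the strict $\frac{1}{3}$ threshold, since the naive sum is too loose; I expect to close the gap by coupling the two groups via the extra constraints $R(t_c') \geq \frac{2}{3}$ after $j$'s placement (otherwise the procedure would not advance past $t_c'$ and we would land in case~(b)) and $R(t_c') \leq 1$, which jointly prevent (i) and (ii) from simultaneously attaining their individual upper bounds.
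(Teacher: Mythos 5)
Your part (b) is essentially the paper's argument and is sound: the returned value satisfies $R(t_2)<\frac23$ while a heavy job with $r_h>\frac12$ is still running, giving the $\frac16$ bound; and your observation that a break cannot be caused by the resource constraint (since no placed light job starts after $t_c$ and the heavy share is weakly decreasing, so $R(t)+r_i<\frac23+\frac13$ throughout $[t_c,t_c+p_i)$), hence must be machine saturation at $t_c$, is exactly the reasoning the paper leaves implicit. Your pigeonhole finish $R_m<\frac{m}{6(m-1)}\le\frac13$ is a harmless variant of the paper's $R_{m-1}+\max<\frac16+\frac16$; both, like the paper, tacitly assume $m\ge2$.

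Part (a) has a genuine gap, and it is precisely the step you flag as ``the main obstacle.'' The coupling you propose --- $R\ge\frac23$ at $S_j$ after $j$ is placed, together with $R\le1$ --- does not close the argument. Take $r_h=0.6$, earlier light jobs still running at $S_j$ totalling $0.05$, and $r_j=0.3$: then $0.65<\frac23$ before placing $j$, $0.95\ge\frac23$ after, and $0.95\le1$, so every constraint you list is satisfied, yet the light total is $0.35>\frac13$. The missing ingredient is a dichotomy on $r_j$ itself, forced by the weakly decreasing placement order. If $r_j\ge\frac16$, then $j$ must be the \emph{only} member of $\jset(t_2)\cap\jlset$: any other member was placed earlier, hence has $r_i\ge r_j\ge\frac16$ and is still running at $S_j$, so the resource level at $S_j$ already exceeds $r_h+r_i>\frac12+\frac16=\frac23$ before $j$ is placed, contradicting the placement condition; thus the total is $r_j\le\frac13$. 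If $r_j<\frac16$, the light mass other than $j$ is $<\frac23-r_h<\frac16$, and adding $r_j<\frac16$ gives $<\frac13$. (Note that this is exactly what rules out the configuration above: with $r_j=0.3$, every earlier co-runner would also have $r\ge0.3$, incompatible with their total being $<\frac16$.) A smaller but real slip: the procedure's invariant is $R(t)\ge\frac23$ for $t<t_2$, so ``$R(S_j^-)<\frac23$'' is false as written; what you may use is that the resource level at time $S_j$ restricted to jobs placed \emph{before} $j$ in the loop is $<\frac23$.
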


\begin{proof}[1a]
Assume that $t_1 = t_2$ and let $h \in \jhset$ be the last heavy job scheduled in Step~(1). Note that, in particular, $C_h = t_1$. If $\jset(t_2) \cap \jlset = \emptyset,$ then obviously $R(\jset(t_2) \cap \jlset) = 0 < \frac13.$ Otherwise, each job in the $\jset(t_2) \cap \jlset$ set must have been started before $t_2.$ Consider a moment $t = t_2 - \varepsilon,$ for an arbitrarily small $\varepsilon.$ The set $\jset(t)$ consists of job $h,$ zero or more light jobs $i \in \jlset$ for which $C_i = t_2,$ and all the jobs from the $\jset(t_2) \cap \jlset$ set.

Select job $j \in \jset(t_2) \cap \jlset$ that was scheduled as the last one. Be reminded that in Step~(2), light jobs are scheduled in the decreasing order of their resource requirements.
From the construction of the algorithm we conclude that $R(t) - r_j < \frac23.$ If $r_j \geq \frac16,$ then it must be the only job in the $\jset(t_2) \cap \jlset$ set, as $r_h + r_j > \frac23.$ Thus, $\frac16 \leq R(\jset(t_2) \cap \jlset) < \frac13.$ If $r_j < \frac16,$ then $R(t) - r_j - r_h < \frac23 - \frac12 = \frac16,$ so $R(t) - r_h < \frac16 + r_j < \frac13.$ Thus, $R(\jset(t_2) \cap \jlset) < \frac13.$ \qed
\end{proof}

\begin{proof}[1b]
Assume that $t_1 > t_2$ and let $h \in \jhset$ be the heavy job executed at the moment $t_2.$ The value returned by the \textsc{Schedule-2/3} subroutine guarantees that $R(t) \geq \frac23$ for $0 \leq t < t_2.$ At the same time, it must hold that $\frac12 < R(t_2) < \frac23.$ Thus, at moment $t_2$ we have $R(\jset(t_2) \cap \jlset) + r_h < \frac23$ and, as a consequence, $R(\jset(t_2) \cap \jlset) < \frac 16.$

Now, we will show that $R_m\left(\jlset \setminus \jsset\right) < \frac13.$ There are two cases to be considered: either all the machines are busy at the moment $t_2$ or not. If not all the machines are busy, then all the light jobs were scheduled. Otherwise, any remaining light job would be scheduled on a free machine before the \textsc{Schedule-2/3} subroutine would end. Thus, $\jlset \setminus \jsset = \emptyset$ and $R_m\left(\jlset \setminus \jsset\right) = 0.$

Now, assume that all the machines are busy at the moment $t_2,$ i.e. $|\jset(t_2) \cap \jlset| = m - 1.$ As the light jobs were scheduled in the weakly decreasing order of their resource requirements and $R(\jset(t_2) \cap \jlset) < \frac16,$ which was proven before, we have
$$\max\{r_i \colon i \in \jlset \setminus \jsset\} \leq \min\{r_i \colon i \in \jset(t_2) \cap \jlset\} < \frac16.$$
As $R(\jset(t_2) \cap \jlset) < \frac16$ and $|\jset(t_2) \cap \jlset| < m,$ we conclude that $R_{m-1}(\jset(t_2) \cap \jlset) = R(\jset(t_2) \cap \jlset) < \frac16$ and $R_{m-1}\left(\jlset \setminus \jsset\right) \leq R_{m-1}(\jset(t_2) \cap \jlset) < \frac16.$
Finally,
$$\begin{array}{rcl}
    R_m\left(\jlset \setminus \jsset\right) & \leq & R_{m-1}\left(\jlset \setminus \jsset\right)\\
    & & + \max\{r_i \colon i \in \jlset \setminus \jsset\}\\
    & < & \frac16 + \frac16 = \frac13. \quad\qed
\end{array}$$
\end{proof}

Before Step~(3) is started, all heavy jobs and some (perhaps none) light jobs are scheduled. At this point, we ignore all the jobs from the $\jset(t_2) \cap \jlset$ set, i.e. the scheduled light jobs for which $S_i \leq t_2 < C_i.$ Being ignored, they are treated as scheduled jobs that do not occupy machines and do not use any resources. Thus, we will not schedule these jobs in Step~(3). These jobs will be rescheduled in Step~(4). As a consequence, at any point $t \geq t_2$ not more than a single heavy job is actually executed. In Step~(3), we first schedule all the medium jobs using a standard list scheduling approach, and then, if $t_1 = t_2$, we try to schedule not-yet scheduled light jobs using the \textsc{Schedule-2/3} routine. This intuition is formalized in Alg.~\ref{alg:step-3}. Note that the time complexity of Alg.~\ref{alg:step-3} is $O(|\jset|\log|\jset|).$

\begingroup
\setcounter{algorithm}{0}
\renewcommand\thealgorithm{\arabic{algorithm}C} 
\begin{algorithm}[tb]
    \caption{Approximation algorithm --- Step 3 out of 4}
    \label{alg:step-3}
    \begin{algorithmic}
    \State {\textsc{Ignore} all the jobs from the $\jset(t_2) \cap \jlset$ set}
    \State {Use a standard list scheduling approach to \textsc{schedule} all the jobs from $\jmset$ in a weakly increasing order of $r_i,$ starting from moment $t_2$}
    \State $t_g \gets \sup\{t \colon R(t) \geq \frac23\}$
    \If {$t_1 = t_2$}
        \State $t_g \gets \textsc{Schedule-2/3}(\jlset \setminus \jsset, t_g)$
    \EndIf
    \State $t_c \gets \max\{C_i \colon i \in \jsset\}$
    \State $t_3 \gets \max\{t_g, t_1\}$
    \end{algorithmic}
\end{algorithm}
\endgroup

\begin{proposition}
\label{prop:step-3}    
Let $t_c$ and $t_g$ be defined as in Alg.~\ref{alg:step-3}. After Step~(3) is finished, the following statements hold:
\begin{enumerate}[label=(\alph*)]
    \item If $t_c = t_1,$ then $R_m(\jset \setminus \jsset) < \frac13.$
    \item If $t_c > t_1 = t_2,$ then $R_m\left(\jset(t_g) \cup (\jset \setminus \jsset)\right) < \frac23.$
    \item If $t_c > t_1 > t_2,$ then $R_m\left(\jset(t_g) \cup (\jset \setminus \jsset)\right) < \frac56.$
\end{enumerate}
\end{proposition}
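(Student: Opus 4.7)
My plan is to prove each of the three cases of Proposition~\ref{prop:step-3} separately, combining the bounds from Proposition~\ref{prop:step-2} with a careful analysis of what Step~(3) produces. A shared starting observation is that after Step~(3) every heavy job (placed in Step~(1)) and every medium job (placed at the top of Step~(3)) belongs to $\jsset$, so $\jset \setminus \jsset \subseteq \jlset$ and only light jobs can remain unscheduled.

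For case (a), the hypothesis $t_c = t_1$ forces every scheduled job to complete by $t_1$. When $t_1 > t_2$, Step~(3) does not invoke \textsc{Schedule-2/3}, so $\jset \setminus \jsset$ coincides with $\jlset \setminus \jsset$ at the end of Step~(2), and Proposition~\ref{prop:step-2}(b) directly yields $R_m(\jset \setminus \jsset) < \frac13$. When $t_1 = t_2$, a medium job of positive processing time would extend past $t_1$, so no medium jobs exist in this sub-case; the \textsc{Schedule-2/3} call of Step~(3) then runs on $\jlset \setminus \jsset$, and because no job occupies machines past $t_1$ it cannot break for lack of a free machine, so either $\jlset \setminus \jsset$ is fully exhausted or the residual lies inside the set $\jset(t_2) \cap \jlset$ controlled by Proposition~\ref{prop:step-2}(a), either way giving $R_m(\jset \setminus \jsset) < \frac13$.

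For cases (b) and (c), the hypothesis $t_c > t_1$ forces some job to extend past $t_1$, and I would bound $R_m(\jset(t_g) \cup (\jset \setminus \jsset))$ by decomposing it into two pieces and applying Lemma~\ref{lem:rtb}. A key inequality common to both cases is $R(\jset(t_g)) < \frac23$: since $R$ is right-continuous as a step function (with jumps up at starts and down at completions), the set $\{t : R(t) \geq \frac23\}$ is a finite union of half-open intervals with open right ends, so its supremum $t_g$ is not attained. Case (b) then follows by combining this with Proposition~\ref{prop:step-2}(a), noting that $t_1 = t_2$ means no heavy job contributes to $\jset(t_g)$, so $\jset(t_g) \cup (\jset \setminus \jsset)$ splits into a $\jset(t_g)$-part that contains only mediums plus some newly scheduled lights and a residual light part bounded by $\frac13$, yielding the desired $<\frac23$. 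Case (c) uses Proposition~\ref{prop:step-2}(b), giving $R_m(\jlset \setminus \jsset) < \frac13$ and $R(\jset(t_2) \cap \jlset) < \frac16$, together with a case analysis on whether the heavy job is still running at $t_g$ (and how many mediums fit alongside it under $R(t_g) < \frac23$) to reach the $\frac56$ bound.

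The main obstacle will be the bookkeeping around the "ignore" step: the resource function used inside Step~(3) to compute $t_g$ treats the ignored jobs in $\jset(t_2) \cap \jlset$ as consuming no resources, while the notation $\jset(t_g)$ in the proposition statement implicitly refers to the physical set of jobs present at $t_g$. A consistent choice of which side of the decomposition $\jset(t_g) \cup (\jset \setminus \jsset)$ these ignored jobs sit on, and careful avoidance of double-counting, is the central bookkeeping step. A secondary difficulty is keeping case (c) tight: a naive use of Lemma~\ref{lem:rtb} on the raw bounds already exceeds $\frac56$, so one must exploit the weakly decreasing ordering of $r_i$ used by \textsc{Schedule-2/3} in Step~(2), which implies that unscheduled lights have $r_i$ no larger than the already-scheduled ones, to tighten the top-$m$ count of the union.
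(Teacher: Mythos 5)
Your case (a) follows the paper's argument, and your general observations ($\jset\setminus\jsset\subseteq\jlset$ after Step~(3), $R(t_g)<\frac23$ because $t_g$ is a supremum of a union of half-open intervals, the need to keep the ignored jobs of $\jset(t_2)\cap\jlset$ out of the sets bounded here) are all correct. The genuine gap is in case (b). The decomposition you propose --- $\jset(t_g)$ contributing less than $\frac23$ and a residual light part contributing less than $\frac13$, combined via Lemma~\ref{lem:rtb} --- yields a bound of $1$, not the required $\frac23$; the arithmetic does not close. Moreover, the $\frac13$ bound you want for the residual is not available from Proposition~\ref{prop:step-2} in this case: part (a) of that proposition bounds the \emph{ignored} set $\jset(t_2)\cap\jlset$ (which is excluded from both $\jset(t_g)$ and $\jset\setminus\jsset$ here and is only added back in the final proposition before the theorem), and when $t_1=t_2$ Proposition~\ref{prop:step-2} gives no bound on $\jlset\setminus\jsset$ at all, since Step~(2) unschedules every light job with $S_i\geq t_2$. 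The paper proves (b) without Lemma~\ref{lem:rtb}, by a domination argument: the Step-(3) call to \textsc{Schedule-2/3} can only break when all $m$ machines are busy (a light job always fits resource-wise when $R(t)<\frac23$), so either $\jset\setminus\jsset=\emptyset$, or $\jset(t_g)$ contains $m$ jobs each of which has resource requirement at least that of every remaining job (a medium dominates any light, and the lights were placed in weakly decreasing order of $r_i$); hence $R_m\bigl(\jset(t_g)\cup(\jset\setminus\jsset)\bigr)=R(t_g)<\frac23$.

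Relatedly, you attach the ordering/domination idea to case (c), where it is not the operative mechanism. In case (c), $t_1>t_2$ means no light jobs are scheduled in Step~(3) and the Step-(2) stragglers are ignored, so $\jset(t_g)$ contains no lights; a heavy job at $t_g$ would force $t_c=t_1$, and two non-light jobs would exceed $\frac23$, so $\jset(t_g)$ holds at most one medium job and $R(t_g)\leq\frac12$. From there a plain application of Lemma~\ref{lem:rtb} with Proposition~\ref{prop:step-2}(b) gives $\frac12+\frac13=\frac56$ --- no tightening of the top-$m$ count is needed. In short: your plan has the two mechanisms swapped, and as written case (b) does not reach the stated constant.
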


\begin{proof}[2a]
If $t_c = t_1$ and $t_1 = t_2,$ then no jobs were scheduled in Step~(3) and thus $|\jset \setminus \jsset| = 0.$ On the other hand, if $t_c = t_1$ and $t_1 > t_2,$ then all medium jobs are finished before or at $t_1.$ The only jobs that were not scheduled yet are light jobs. According to Prop.~\ref{prop:step-2}(b), we had $R_m\left(\jlset \setminus \jsset\right) < \frac13$ after Step~(2), so now it must hold that $R_m(\jset \setminus \jsset) < \frac13.$ \qed
\end{proof}

\begin{proof}[2b]
Notice that $R(t_g) < \frac23$ and thus at most one medium job is being executed at $t_g.$ If no medium jobs are being executed at $t_g$, then either there were no medium jobs to be scheduled or light jobs made the value of $t_g$ increase. In both cases, as $t_c > t_1 = t_2$, there are only light jobs being executed at $t_g$ and only light jobs are left to be scheduled. Thus, $R_m\left(\jset(t_g) \cup (\jset \setminus \jsset)\right) < \frac23$. Otherwise, the value of $t_g$ would be even larger.
Notice that if exactly one medium job is being executed at $t_g$, then this medium job is the last one executed. Consider two cases. If not all the machines are busy at $t_g,$ then there are no light jobs left to be scheduled and thus $\jset \setminus \jsset = \emptyset$ and $R_m\left(\jset(t_g) \cup (\jset \setminus \jsset)\right) = R(t_g) < \frac23.$ On the other hand, if all the machines are busy at $t_g,$ then a medium job and $m - 1$ light jobs are executed at this moment. As the medium job has larger resource requirement than any light job, and light jobs were scheduled in a weakly decreasing order of their resource requirements, it must hold again that $R_m\left(\jset(t_g) \cup (\jset \setminus \jsset)\right) = R(t_g) < \frac23.$ \qed
\end{proof}

\begin{proof}[2c]
From the assumption that $t_c > t_1 > t_2,$ we conclude that at least one medium job was scheduled in Step~(3) and $t_3 \geq t_g > \max_{i \in \jsset} S_i \geq t_2.$ As all the light jobs for which $S_i \geq t_2$ were unscheduled in Step~(2), all the light jobs for which $C_i > t_2$ were ignored, and $t_1 \neq t_2,$ no light jobs are executed at moment $t_g.$ If two non-light jobs, $i$ and $j,$ were executed at $t_g,$ then it would hold that $r_i + r_j > \frac23$ which contradicts the fact that $t_g \geq \sup\{t \colon R(t) \geq \frac23\}.$ Finally, if a heavy job was executed at $t_g,$ then it would hold that $t_c = t_1$ which contradicts the assumption that $t_c > t_1.$ Thus, at most one medium job can be executed at the moment $t_g,$ and $R(t_g) \leq \frac12.$ At the same moment, according to Prop.~\ref{prop:step-2}(b), we had $R_m\left(\jlset \setminus \jsset\right) < \frac13$ after Step~(2). Based on Lem.~\ref{lem:rtb}, we obtain $R_m\left(\jset(t_g) \cup (\jset \setminus \jsset)\right) < \frac12 + \frac13 = \frac56.$ \qed
\end{proof}

\begin{proposition}
It holds that $t_3 \leq \frac32\opt.$
\end{proposition}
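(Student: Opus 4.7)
The plan is to bound $t_3 = \max(t_g, t_1)$ by handling each component separately. The bound $t_1 \leq \opt$ is almost immediate: any heavy job $i \in \jhset$ has $r_i > \frac{1}{2}$, so $r_i + r_j > 1$ for any two heavy jobs $i,j$. Consequently no two heavy jobs can be executed concurrently in any feasible schedule, and even the optimal schedule must accommodate them sequentially, giving $\opt \geq \sum_{i \in \jhset} p_i = t_1$.

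The substantive work is to show $t_g \leq \frac{3}{2}\opt$, which I would obtain by invoking Lemma~\ref{lem:rta} on the interval $[0, t_g)$. The crux is to verify that $R(t) \geq \frac{2}{3}$ holds throughout $[0, t_g)$. On $[0, t_2)$ this follows from the guarantee of the \textsc{Schedule-2/3} routine used in Step~2. On $[t_2, t_g)$ the property is preserved by the design of Step~3: a medium job placed by list scheduling in weakly increasing order of $r_i$ pairs either with another medium---yielding $R(t) > \frac{2}{3}$ since every medium has $r_i > \frac{1}{3}$---or (on $[t_2, t_1)$) with the heavy job, giving $R(t) \geq r_h + r_m > \frac{5}{6}$; the contribution of the ``ignored'' lights in $\jset(t_2) \cap \jlset$, bounded via Proposition~\ref{prop:step-2}, is non-negative and only helps; and when $t_1 = t_2$, the additional \textsc{Schedule-2/3} call in Step~3 fills in the remaining lights up to the returned value of $t_g$, maintaining the lower bound. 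Applying Lemma~\ref{lem:rta} then yields $t_g \leq \frac{3}{2}\opt$.

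Combining the two bounds, $t_3 \leq \max\bigl(\opt,\; \tfrac{3}{2}\opt\bigr) = \tfrac{3}{2}\opt$.

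The main obstacle I anticipate is the verification on the subinterval $[t_2, t_1)$ when the heavy job executes without a compatible medium (i.e., $r_h + r_m > 1$ for every remaining medium) and the ignored lights alone do not lift $r_h$ above $\frac{2}{3}$. In such a degenerate configuration, a separate argument is needed---either showing that these bad times are dominated by $t_1$ itself (and hence absorbed into the bound $t_1 \leq \opt$), or observing that the same heavy/medium incompatibility forces $\opt$ to schedule these jobs sequentially as well, yielding a matching structural lower bound on $\opt$ that closes the gap.
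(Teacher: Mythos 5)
Your skeleton is the right one and matches the paper's: bound $t_1 \leq \opt$ by pairwise incompatibility of heavy jobs, and try to get $R(t) \geq \frac23$ on $[0,t_g)$ so that Lemma~\ref{lem:rta} gives $t_g \leq \frac32\opt$. You also correctly identify exactly where this breaks: on $[t_2,t_1)$ the heavy job currently running may be incompatible with every remaining medium job ($r_h + r_m > 1$), so some subinterval can have $R(t) < \frac23$ and Lemma~\ref{lem:rta} cannot be applied to all of $[0,t_g)$. But you leave that case open, offering two candidate directions without carrying either one out. Since this is the only genuinely hard case of the proposition, the proof is incomplete as it stands.

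The paper closes it along the lines of your second suggestion, and the missing ingredients are concrete. Let $t \coloneqq \sup\{t \in [t_2,t_1) \colon R(t) < \frac23\}$ be the \emph{last} bad moment. Because heavy jobs are placed in weakly decreasing and medium jobs in weakly increasing order of $r_i$, no medium job starting at or after $t$ could have been started any earlier, so on the whole interval $[t, t_g)$ only heavy and medium jobs run, exactly two at a time. Crucially, no three of these jobs can ever coexist (any two of them already consume more than $\frac13 + \frac13$ together with a third exceeding capacity, or involve a heavy pair), so even the optimal schedule can keep at most two machines busy with this residual workload from $t$ onward. Since your schedule keeps exactly two busy throughout $[t, t_g)$, this yields the \emph{stronger} conclusion $t_g \leq \opt$ in this case --- not merely $t_g \leq \frac32\opt$ --- and hence $t_3 = \max\{t_g, t_1\} \leq \opt$. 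So your instinct that the incompatibility ``forces $\opt$ to schedule these jobs sequentially as well'' is the right one, but the argument needs the choice of the last bad point $t$ and the two-machines-saturated observation to become a proof; your first suggested escape (absorbing the bad times into $t_1 \leq \opt$) does not work by itself, because $t_g$ can exceed $t_1$ in this case.
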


\begin{proof}
It holds that $t_3 \geq t_1$. If $t_1 = t_3,$ then $t_3 \leq \opt \leq \frac32\opt,$ so assume that $t_3 > t_1.$ First, consider a case when $t_3 > t_1 = t_2$. It means that all medium jobs (if they exist) were scheduled starting from moment $t_1.$ As for any medium job $i \in \jmset$ we have $\frac13 < r_i \leq \frac12,$ any two medium jobs can be executed in parallel, and if it is so, more than $\frac23$ of the resource is used. In such a case, just after medium jobs are scheduled, we have $t_g = \sup\{t \colon R(t) \geq \frac23\},$ and after Step~(3) is finished, one has $R(t) \geq \frac23$ for all $t \in [t_2, t_3)$. Now, be reminded about the ignored jobs from the $\jset(t_2) \cap \jlset$ set. If we reconsider them, even at the cost of exceeding the available amount of resources or the number of machines, we have $R(t) \geq \frac23$ for all $t \in [0, t_3)$. This is enough to state that, based on Lem.~\ref{lem:rta}, $t_3 \leq \frac32\opt,$ although some of the jobs scheduled before $t_3$ are ignored and will be rescheduled later.

Now, consider a case when $t_3 > t_1 > t_2$. As $t_1 > t_2$, no light jobs are scheduled in Step~(3). This inequality implicates that at least one medium job could have been scheduled together with a heavy job in the $[t_2, t_1)$ interval. As heavy jobs are scheduled in a weakly decreasing order of the resource requirements, and medium jobs are scheduled in a weakly increasing order of the resource requirements, there are two possibilities.

If a medium job is being executed at every moment $t \in [t_2, t_1)$, then $t_g = \sup\{t \colon R(t) \geq \frac23\}$ and --- based on the same reasoning as in the previous case --- we have $R(t) \geq \frac23$ for all $t \in [0, t_g)$. As $t_3 = t_g$, based on Lem.~\ref{lem:rta} we obtain $t_3 \leq \frac32\opt.$

In the second possibility, there exists a point $t$ in the $[t_2, t_1)$ interval, for which $R(t) < \frac23.$ Consider the latest such point, i.e. $t \coloneqq \sup\left\{t \in [t_2, t_1) \colon R(t) < \frac23\right\}.$ The $t$ value is either equal to $t_1$, or to a starting time of a medium job. In both cases, no medium job $i$ for which $S_i \geq t$ could have been scheduled earlier. Thus, in the $[t, t_g)$ interval (if non-empty) all jobs are either heavy or medium, and are scheduled on exactly $2$ machines at the same time. Moreover, it would be not possible to execute three such jobs in parallel due to their resource requirements, so in the optimal schedule not more than two machines would be busy starting from point $t$ due to the resource requirements of the medium jobs. In our case, both machines are busy in the $[t, t_g)$ interval, so it must hold that $t_g \leq \opt$. If so, then $t_3 = \max\{t_g, t_1\} \leq \opt \leq \frac32\opt.$ \qed
\end{proof}

After Step~(3) is finished, we unschedule ignored jobs from the $\jset(t_2) \cap \jlset$ set and all the jobs from the $\jset(t_g)$ set, and then we schedule all jobs that are in the updated $\jset \setminus \jsset$ set. As we now know that $t_3 \leq \frac32\opt$ and that $t_3 \geq t_g$, all the machines are free starting from the $t_3$ moment. This intuition is shown in Alg.~\ref{alg:step-4}. It can be now shown that all the jobs to be scheduled are satisfied with a single unit of the resource. As it is so, all the machines can execute $m$ jobs in parallel, whichever $m$ jobs we choose. Thus, a schedule provided by the LPT list strategy is a $\frac43$-approximation solution for the $\jset \setminus \jsset$ set.

\begingroup
\setcounter{algorithm}{0}
\renewcommand\thealgorithm{\arabic{algorithm}D} 
\begin{algorithm}[tb]
    \caption{Approximation algorithm --- Step 4 out of 4}
    \label{alg:step-4}
    \begin{algorithmic}
    \State {\textsc{Unschedule} all the ignored jobs from the $\jset(t_2) \cap \jlset$ set}
    \State {\textsc{Unschedule} all the jobs from the $\jset(t_g)$ set}
    \State {Use an LPT list scheduling approach to \textsc{schedule} all the jobs from the  $\jset \setminus \jsset$ set (including just unscheduled ones), starting from moment $t_3$}
    \State $t_4 \gets \max\{t_3, \cmax\}$
    \end{algorithmic}
\end{algorithm}
\endgroup

\begin{proposition}
After Step~(3) is finished, it holds that $$R_m((\jset(t_2) \cap \jlset) \cup \jset(t_g) \cup (\jset \setminus \jsset)) < 1.$$
\end{proposition}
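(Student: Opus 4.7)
The plan is a case analysis following the hypotheses of Propositions~\ref{prop:step-2} and~\ref{prop:step-3}, with Lemma~\ref{lem:rtb} combining the individual bounds at the end. A preliminary observation is that at most $m$ jobs execute in parallel at any instant, so $|\jset(t_2) \cap \jlset| \leq m$ and hence $R_m(\jset(t_2) \cap \jlset) = R(\jset(t_2) \cap \jlset)$; the bounds of Proposition~\ref{prop:step-2} thus translate verbatim to bounds on $R_m$.

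The two ``matched'' cases of Proposition~\ref{prop:step-3} are direct. If $t_c > t_1 > t_2$, I combine Proposition~\ref{prop:step-2}(b)'s $< \tfrac{1}{6}$ with Proposition~\ref{prop:step-3}(c)'s $< \tfrac{5}{6}$; if $t_c > t_1 = t_2$, I combine Proposition~\ref{prop:step-2}(a)'s $< \tfrac{1}{3}$ with Proposition~\ref{prop:step-3}(b)'s $< \tfrac{2}{3}$. Lemma~\ref{lem:rtb} yields $R_m$ of the union strictly below $1$ in both situations.

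The remaining case $t_c = t_1$ is the delicate one, since Proposition~\ref{prop:step-3}(a) directly controls only $\jset \setminus \jsset$. Here $t_c = t_1$ forces every scheduled job to complete by $t_1$, so $R(t) = 0$ for $t > t_1$ and $t_g \leq t_1$; combined with $t_g \geq t_2$ (from Step~(2)'s invariant that $R(t) \geq \tfrac{2}{3}$ on $[0, t_2)$) one has $t_2 \leq t_g \leq t_1$. When $t_1 = t_2$, then $t_g = t_1$ and every scheduled job has already completed, so both $\jset(t_g)$ and $\jset(t_2) \cap \jlset$ are empty and the union reduces to $\jset \setminus \jsset$ with $R_m < \tfrac{1}{3}$. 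When $t_1 > t_2$, the key claim is that $\jset(t_g) \subseteq \{h\} \cup (\jset(t_2) \cap \jlset)$, where $h$ denotes the unique heavy job executing at $t_g$: a medium job running alongside $h$ at $t_g$ would give $r_h + r_{\text{med}} > \tfrac{1}{2} + \tfrac{1}{3} > \tfrac{2}{3}$, contradicting the defining property of $t_g$; and any light job at $t_g$ must have been scheduled during Step~(2) (no light job is started in Step~(3) when $t_1 \neq t_2$), so it lies in $\jset(t_2) \cap \jlset$. Since heavy jobs are scheduled in weakly decreasing order of $r_i$ in Step~(1), $r_h$ is bounded above by the requirement of the heavy job running at $t_2$, and the inequality used in the proof of Proposition~\ref{prop:step-2}(b) then gives $r_h + R(\jset(t_2) \cap \jlset) < \tfrac{2}{3}$. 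Lemma~\ref{lem:rtb}, combined with Proposition~\ref{prop:step-3}(a), finally yields $R_m(\text{union}) < \tfrac{2}{3} + \tfrac{1}{3} = 1$.

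The main obstacle is precisely this last subcase: a naive triple application of Lemma~\ref{lem:rtb} to the three sets independently would overshoot $1$, so one must notice that $\jset(t_g)$ is absorbed into $\{h\} \cup (\jset(t_2) \cap \jlset)$ and bound the heavy job $h$ jointly with the light jobs at $t_2$ rather than separately.
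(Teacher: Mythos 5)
Your proof is correct, and its overall strategy --- splitting on the cases of Propositions~\ref{prop:step-2} and~\ref{prop:step-3} and merging the resulting bounds with Lemma~\ref{lem:rtb} --- is exactly what the paper intends; the paper's own proof is literally the single sentence that the claim ``follows directly'' from those results. What you add is genuine content. In the two cases with $t_c > t_1$ the claim really is a one-line application of Lemma~\ref{lem:rtb} ($\tfrac16 + \tfrac56$ and $\tfrac13 + \tfrac23$), but in the case $t_c = t_1 > t_2$ it does \emph{not} follow directly: Proposition~\ref{prop:step-3}(a) controls only $\jset \setminus \jsset$, while $\jset(t_g)$ contains a heavy job whenever $t_g < t_1$, and bounding the three sets independently overshoots $1$. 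Your absorption argument --- $\jset(t_g) \subseteq \{h\} \cup (\jset(t_2) \cap \jlset)$, with the heavy and light contributions bounded \emph{jointly} by $r_h + R(\jset(t_2) \cap \jlset) \leq R(t_2) < \tfrac23$ via the monotone ordering of heavy jobs and the inequality from the proof of Proposition~\ref{prop:step-2}(b) --- is precisely the step the paper omits, and it is sound. One minor imprecision: in the subcase $t_c = t_1 = t_2$ you assert $t_g = t_1$, which depends on how $R(t)$ is evaluated on $[0,t_2)$ once jobs are ignored; what you actually need, and what holds regardless, is that $\jset(t_2) = \emptyset$ because every scheduled job completes by $t_c = t_2$, and that $R_m(\jset(t_g)) \leq R(t_g) < \tfrac23$ by the defining property of $t_g$, which together with Proposition~\ref{prop:step-3}(a) still gives a bound strictly below $1$. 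This does not affect the validity of your argument.
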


\begin{proof}
The proof follows directly from Prop.~\ref{prop:step-2}--\ref{prop:step-3} and Lem.~\ref{lem:rtb}.
\end{proof}

\begin{theorem}
Algorithm~\ref{alg:step-1}--\ref{alg:step-4} is a log-linear $2\frac56$-approximation algorithm for the $\pma|\res 1{\cdot}{\cdot}|\cmax$ problem.
\end{theorem}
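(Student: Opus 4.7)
The plan is to assemble the bound $\cmax(T) \leq 2\tfrac56 \opt$ by combining the two estimates that the earlier propositions were designed to yield, namely $t_3 \leq \tfrac32 \opt$ and $t_4 - t_3 \leq \tfrac43 \opt$, and then account for the running time.

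First, I would observe that by the construction of Step (4) together with the definition $t_4 = \max\{t_3, \cmax\}$, it suffices to bound the length $t_4 - t_3$ of block $T_4$. The key input is the immediately preceding proposition, which states that the set $J^* \coloneqq (\jset(t_2) \cap \jlset) \cup \jset(t_g) \cup (\jset \setminus \jsset)$ of jobs that remain to be scheduled at the start of Step (4) satisfies $R_m(J^*) < 1$. By the definition of $R_m$, this means that any selection of at most $m$ jobs from $J^*$ has total relative resource consumption strictly below $1$, so the resource constraint is automatically satisfied whenever all $m$ machines are simultaneously busy with jobs from $J^*$. Consequently, the behaviour of the LPT list scheduling rule inside block $T_4$ coincides with its behaviour on an instance of the plain $\pma||\cmax$ problem: no job is ever blocked by lack of resource.

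Applying Graham's classical analysis of LPT for $\pma||\cmax$, the schedule produced for $J^*$ has makespan at most $\bigl(\tfrac43 - \tfrac{1}{3m}\bigr)\opt(J^*)$, where $\opt(J^*)$ is the optimal makespan for the sub-instance induced by $J^*$. Since $J^* \subseteq \jset$ and removing jobs can only shorten the optimal schedule, $\opt(J^*) \leq \opt$, which gives $t_4 - t_3 \leq \tfrac43 \opt$. Combining this with the already established $t_3 \leq \tfrac32 \opt$, one obtains
\[
\cmax(T) \;=\; t_4 \;\leq\; t_3 + (t_4 - t_3) \;\leq\; \tfrac32 \opt + \tfrac43 \opt \;=\; \tfrac{17}{6}\opt \;=\; 2\tfrac56\,\opt,
\]
which is the claimed approximation guarantee.

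For the log-linear running time, I would just audit the four steps: Step (1) only sorts $\jhset$ and places the jobs on a single machine, costing $O(|\jhset|\log|\jhset|)$; Step (2) invokes \textsc{Schedule-2/3} on $\jlset$, which the text has already noted runs in $O(|\jlset|\log|\jlset|)$; Step (3) combines list scheduling of $\jmset$ with at most one further call to \textsc{Schedule-2/3} on the leftover light jobs, giving $O(|\jset|\log|\jset|)$; and Step (4) is LPT, again $O(|\jset|\log|\jset|)$. Summing yields $O(n \log n)$ overall. The main delicate point in the whole argument is the justification that Graham's LPT bound truly applies to $T_4$: one must be sure that the $R_m(J^*) < 1$ guarantee, inherited from Propositions~\ref{prop:step-2}--\ref{prop:step-3} via Lemma~\ref{lem:rtb}, really removes every possible interaction with the resource, so that no LPT step is ever forced to leave a machine idle for resource reasons while there is still an unscheduled job available.
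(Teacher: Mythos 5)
Your proposal is correct and follows essentially the same route as the paper: it combines the proposition $t_3 \leq \frac32\opt$ with the proposition $R_m\bigl((\jset(t_2) \cap \jlset) \cup \jset(t_g) \cup (\jset \setminus \jsset)\bigr) < 1$, observes that the latter makes the resource constraint vacuous in block $T_4$ so that Graham's LPT bound yields $t_4 - t_3 \leq \frac43\opt$, and sums the two parts; the running-time audit also matches the complexities stated in the paper. Your closing remark about the $\bigl(\frac43 - \frac1{3m}\bigr)$ refinement is exactly the observation the paper makes immediately after the theorem.
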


Notice that the $2\frac56$-approximation ratio leaves us room for immediate improvement. In fact, based on the result by Graham \cite{Graham1969}, as we can apply the LPT strategy in Step (4) without any concerns about the orthogonal resource, it must hold that $t_4 - t_3 \leq \left(\frac43 - \frac1{3m}\right)\cdot\opt$. Thus, Alg.~\ref{alg:step-1}--\ref{alg:step-4} is $\left(2\frac56 - \frac1{3m}\right)$-approximation.

\section{Simulations and extensions}
\label{sec:se}

The log-linear algorithm presented in Sec.~\ref{sec:aa} provides a schedule that is not more than $2\frac56$ times longer than the optimal one. This is so for arbitrary independent jobs and arbitrary resource requirements.
While our principal contribution is in theory, our log-linear algorithm is also easily-implementable,
so in this section we evaluate our algorithm and compare its average-case performance to standard heuristics.

\subsection{Compared algorithms}

We will compare three variants of our algorithm against a number of greedy heuristics based on list scheduling algorithms. All the algorithms were implemented in \textsc{Python}; we performed our experiments on an Intel Core i7-4500U CPU @ 3.00GHz with 8GB RAM.

The theoretical algorithm from Sec.~\ref{sec:aa} will be denoted by \emph{ApAlg}. Its first extension, denoted by \emph{ApAlg-S}, introduces an additional step of backfilling. Namely, after the \emph{ApAlg} algorithm is finished, we iterate over all the jobs in order of their starting times, and reschedule them so they start at the earliest moment possible. Note that this additional step never increases the starting time of any job, and thus \emph{ApAlg-S} is also a $2\frac56$-approximation algorithm. The second extension, denoted by \emph{ApAlg-H} is a heuristic algorithm based on \emph{ApAlg}. In this case, the \textsc{Schedule-2/3} subroutine (see Alg.~\ref{alg:schedule-23}) is replaced. In \emph{ApAlg-H}, it schedules jobs in a strict weakly decreasing order of $r_i$ (so no job $j$ such that $r_j < r_i$ is started before job $i$), regardless of whether the total resource consumption at $t$ has exceeded $\frac23$.

We compare the \emph{ApAlg}, \emph{ApAlg-S} and \emph{ApAlg-H} algorithms against four list scheduling algorithms: \emph{LPT} (Longest Processing Time), \emph{HRR} (Highest Resource Requirement), \emph{LRR} (Lowest Resource Requirement), and \emph{RAND} (Random Order).
Any list scheduler starts by sorting jobs according to the chosen criterion.
Then, when making a scheduling decision, it seeks for the first job on the list that can be successfully scheduled, i.e. has its resource requirements not greater than what is left given jobs being currently executed (if no such job exists, or all processors are busy, the algorithm moves to the next time moment when any job completes).
Thus, the worst-case running time of the \emph{ApAlg-S}, \emph{LPT} and \emph{RAND} algorithms is $O(n^2)$. The worst-case running time of the \emph{LRR} and \emph{HRR} algorithms is $O(n\log n)$ -- these algorithms can use binary search to find the first job from the list having resource requirement not exceeding the currently available resources.

\subsection{Instances}

Our simulations are based on the dataset provided by the MetaCentrum Czech National Grid \cite{metacentrumlogs,Klusacek2017}. In order to avoid normalizing data from different clusters, we arbitrarily chose the cluster with the largest number of nodes (\emph{Zapat}). The \emph{Zapat} cluster consisted of $112$ nodes, each equipped with $16$ CPU cores and $134$GB of RAM. This cluster was monitored between January 2013 and April 2015, which resulted in $299\,628$ log entries.

Each entry provides information about job processing times ($p_i$) and their main memory requirements ($r_i$). We limit ourselves to entries for which both $p_i$ and $r_i$ are less or equal to their respective $99$th percentiles, so the data can be safely normalized. As different jobs may be executed in parallel on each node, we consider the main memory to be a single orthogonal resource. We assume that the total memory size (total amount of the resource) is equal to the maximum memory requirement in the set of all the considered job entries.
Thus, we normalize all the resource requirements so $r_i \in [0, 1]$ (where 1 is the resource capacity of the simulated system). As we study the problem with sequential jobs, we also assume that each job from the trace requires a single CPU. In Fig.~\ref{fig:prdist}, we present how the job processing times and normalized memory requirements were distributed within the log. Most of the jobs have rather low resource requirements. In fact, the $25$th, $50$th and $75$th percentiles of the resource requirements distribution are equal to $0.0165$, $0.0169$, and $0.068$, respectively. We have also analysed how the values of $p_i$ and $r_i$ correlate to each other. As the distributions of $p_i$ and $r_i$ clearly are not Gaussian-like, we calculated the Spearman's correlation coefficient. This requires an additional assumption that the relation between $p_i$ and $r_i$ is monotonic. The calculated value is $0.14207$ which suggests positive, yet not very significant correlation. This result was verified visually.

\begin{figure}[t]
    \vspace*{-1.5em}
    \subfloat{
        \includegraphics[width=0.48\textwidth]{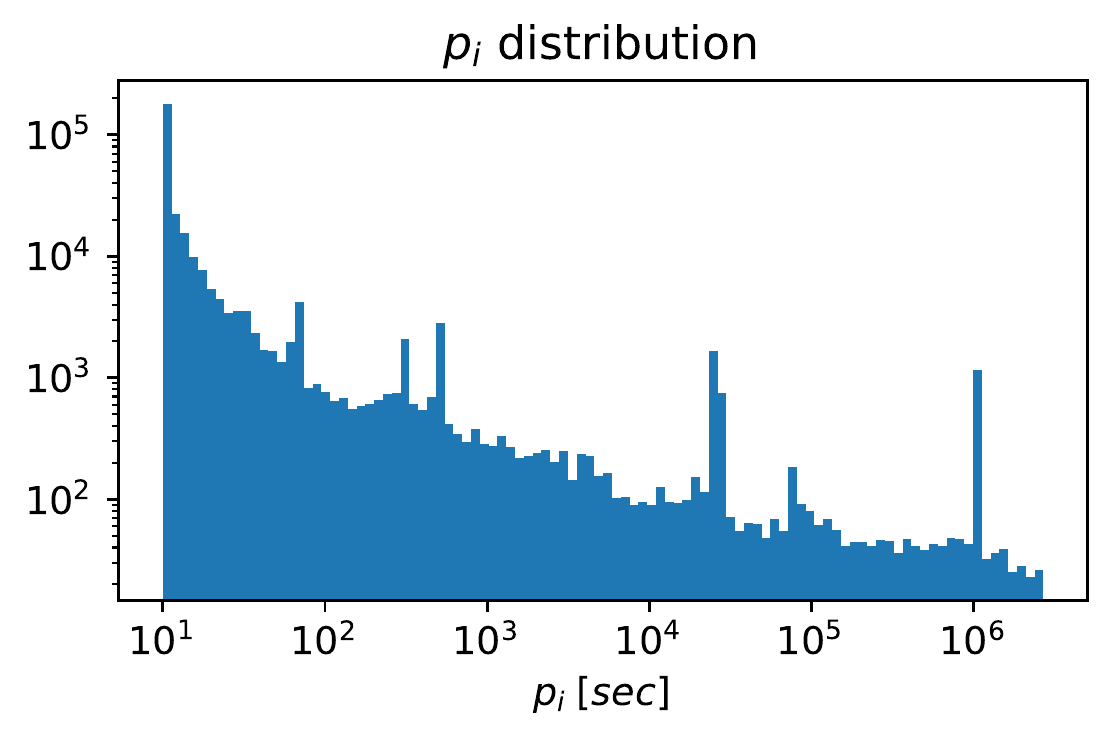}
    }
    \subfloat{
        \includegraphics[width=0.48\textwidth]{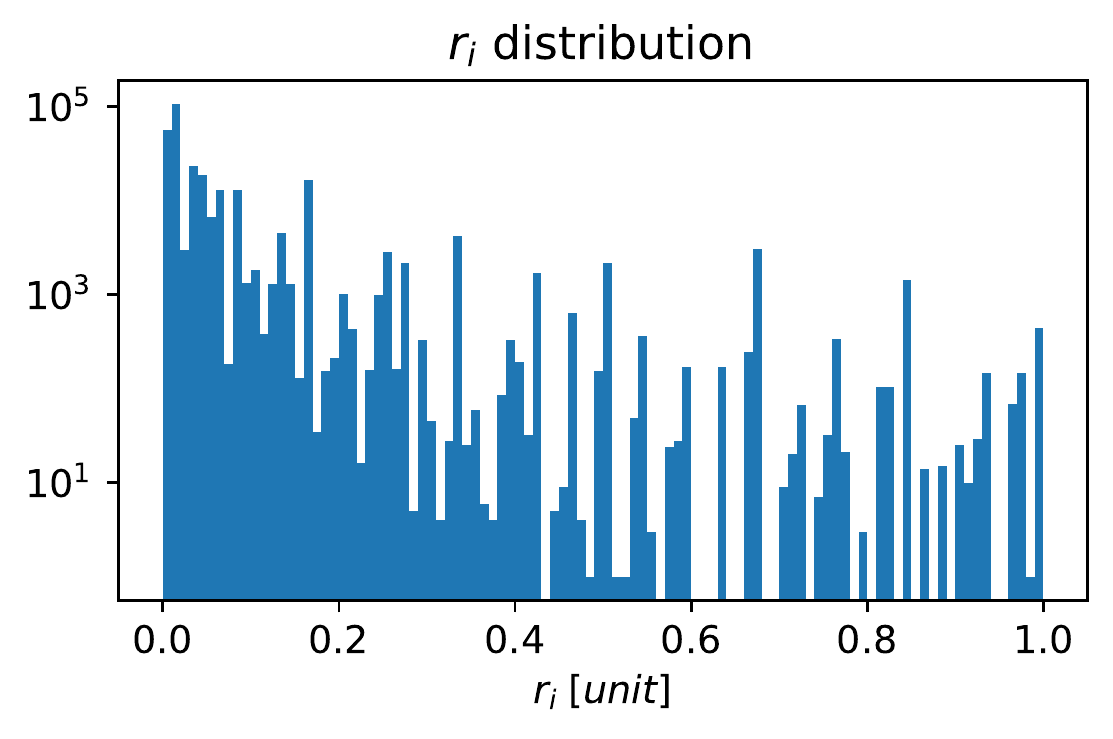}
    }
    \caption{Job processing time and resource requirement distribution in cluster \textit{Zapat} (both axes limited to the 99th percentile).}
    \label{fig:prdist}
\end{figure}

For each combination of the number of jobs $n \in \{500, 1000, 5000, 10000\}$ and the number of machines $m \in \{10, 20, 50, 100\}$, we generated $30$ problem instances. The processing time and resource requirement of each job were set to the processing time and memory requirement of a job randomly chosen from the log. As the lower bound on the optimal schedule length is $L = \max\{\sum_{i \in \jset} p_i/m, \sum_{i \in \jset} p_i\cdot r_i\}$, we only considered instances in which $\max_{i \in \jset} p_i < L.$ This way, we omitted (trivial) instances for which the length of the optimal schedule is determined by a single job.

\subsection{Simulation results}

In Fig.~\ref{fig:approxf}, we present the results obtained for all the algorithms and all the $(n, m)$ combinations.
We report the returned \cmax{} values as normalized by the lower-bound of the optimal schedule length, i.e. $\max\{\sum_{i \in \jset} p_i/m, \sum_{i \in \jset} p_i\cdot r_i\}.$
For lower numbers of machines (left part of the figure) the results of \emph{ApAlg}, \emph{ApAlg-S}, and \emph{ApAlg-H} algorithms are comparable. However, when the number of machines increases, the original approximation algorithm is significantly outperformed by the \emph{ApAlg-S} and \emph{ApAlg-H} variants. In the considered job log, the $50$th percentile on the normalized resource requirement was $0.0169$. We would thus expect that usually not more than $50$ machines are busy in the initial part of a schedule provided by the \emph{ApAlg} variant (due to the threshold of $\frac23$ on a total resource consumption). In such cases, the \emph{ApAlg-S} and \emph{ApAlg-H} variants gain a clear advantage, as they can potentially make use of all the machines, if possible.

\begin{figure}[p]
    \centering
    \subfloat[$(500, 10)$]{\includegraphics[width=0.24\textwidth]{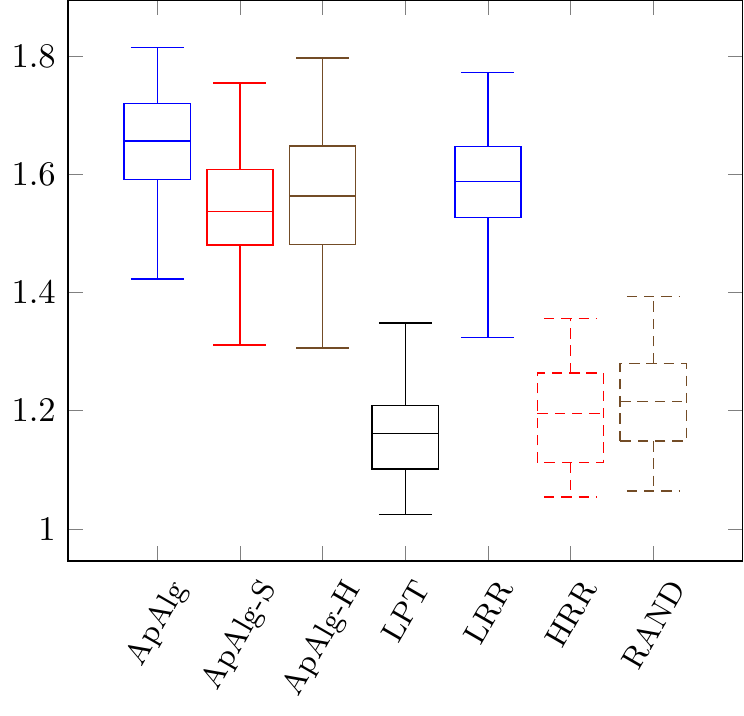}}
    \subfloat[$(500, 20)$]{\includegraphics[width=0.24\textwidth]{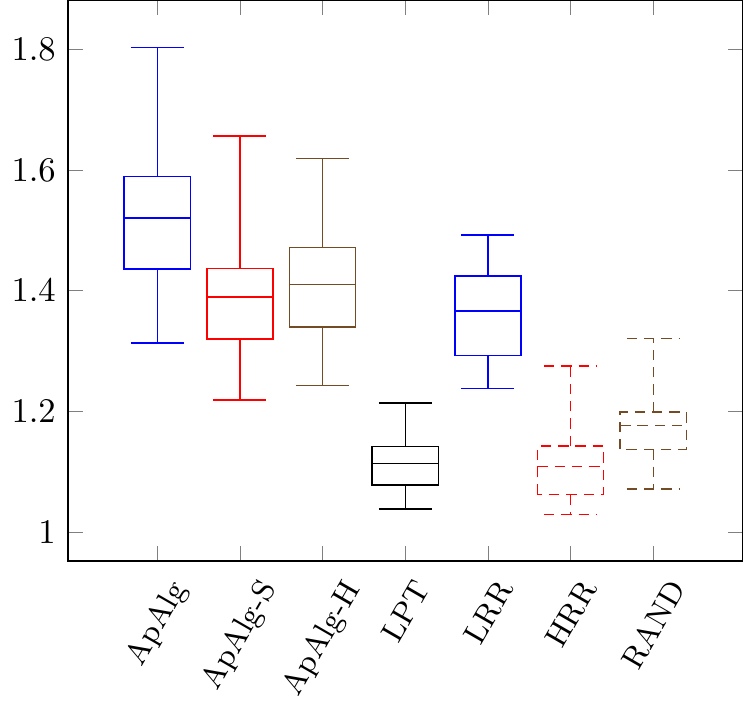}}
    \subfloat[$(500, 50)$]{\includegraphics[width=0.24\textwidth]{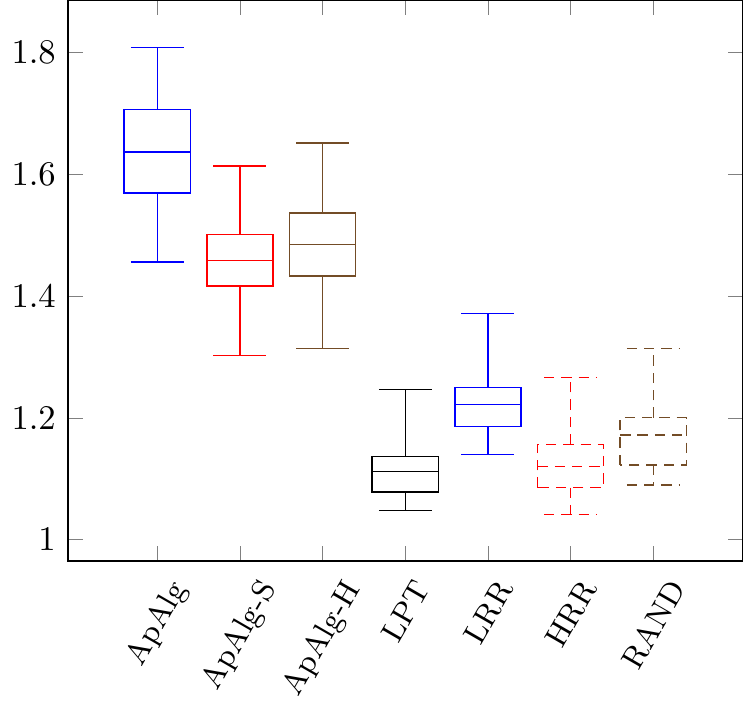}}
    \subfloat[$(500, 100)$]{\includegraphics[width=0.24\textwidth]{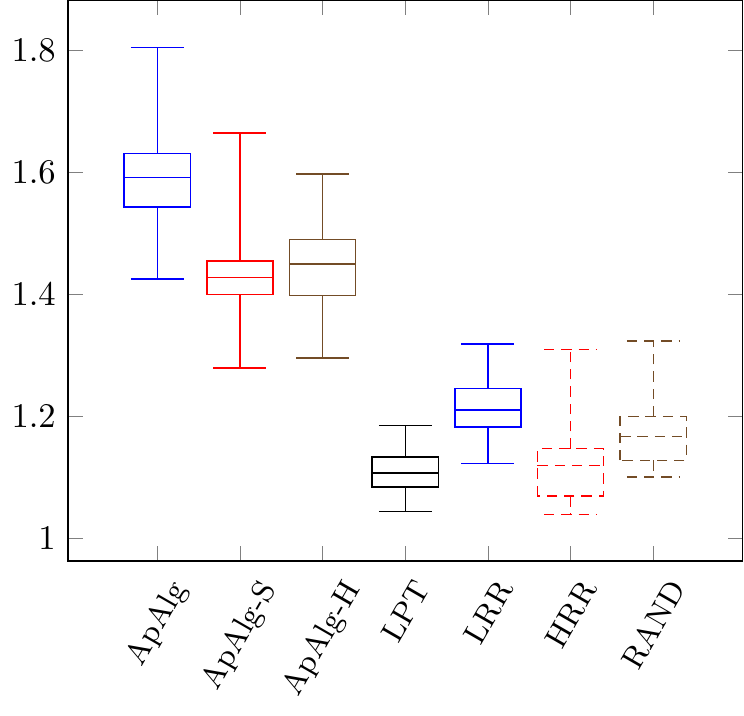}}
    \hfil
    \subfloat[$(1000, 10)$]{\includegraphics[width=0.24\textwidth]{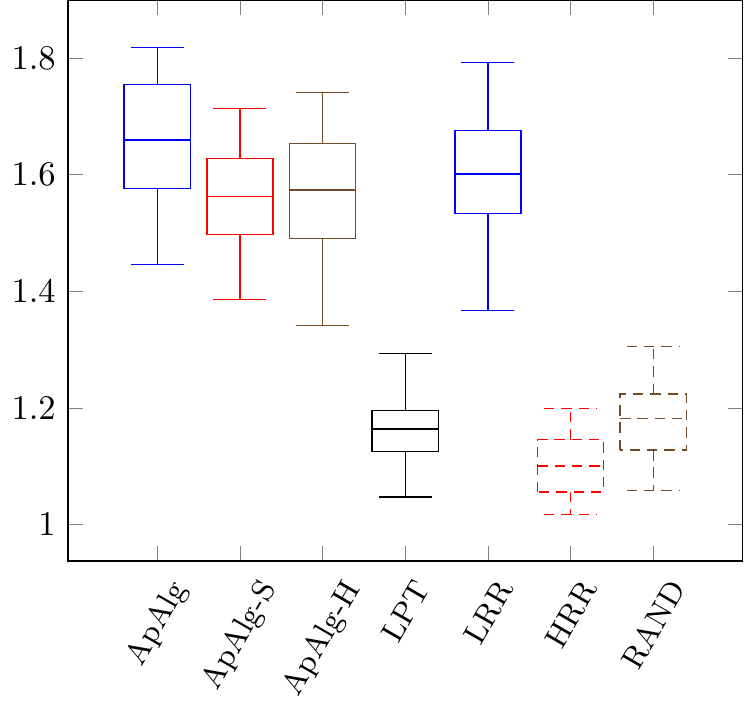}}
    \subfloat[$(1000, 20)$]{\includegraphics[width=0.24\textwidth]{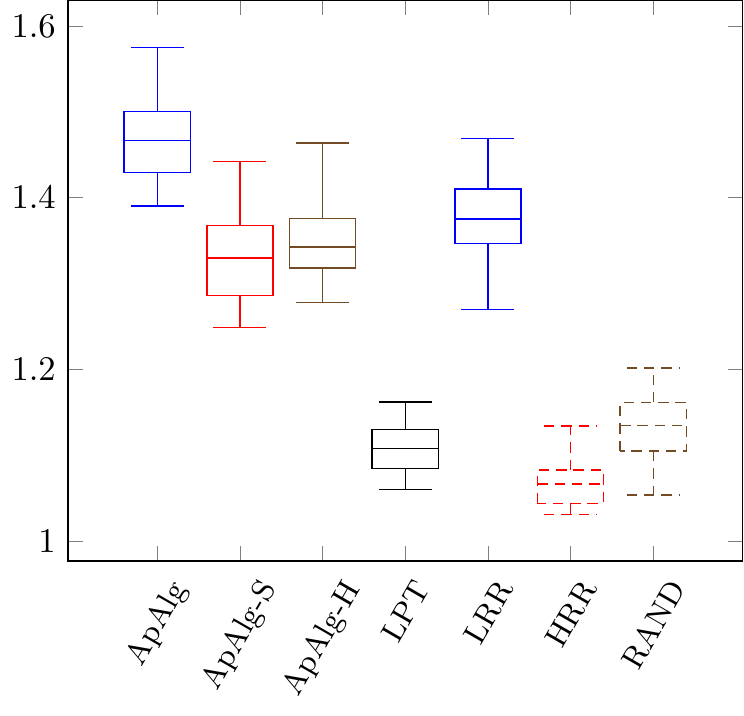}}
    \subfloat[$(1000, 50)$]{\includegraphics[width=0.24\textwidth]{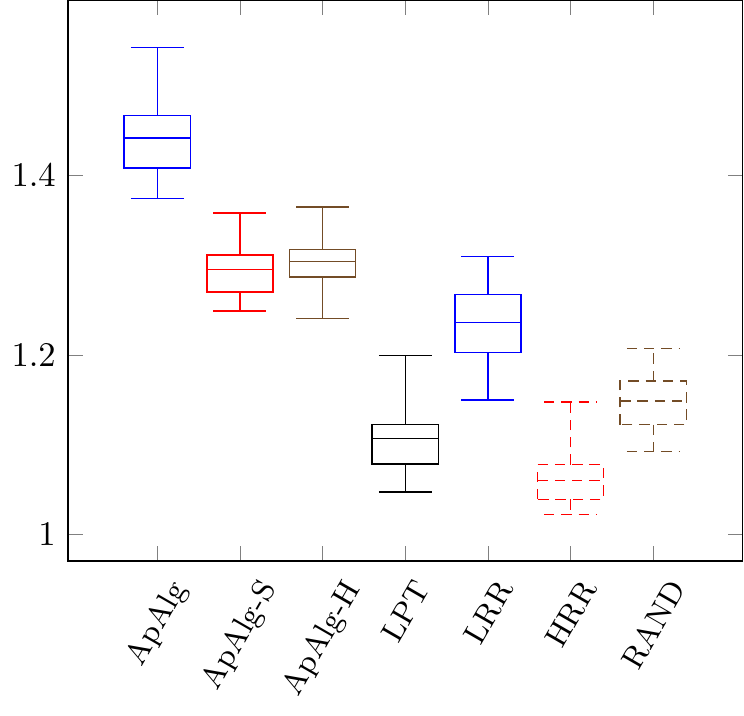}}
    \subfloat[$(1000, 100)$]{\includegraphics[width=0.24\textwidth]{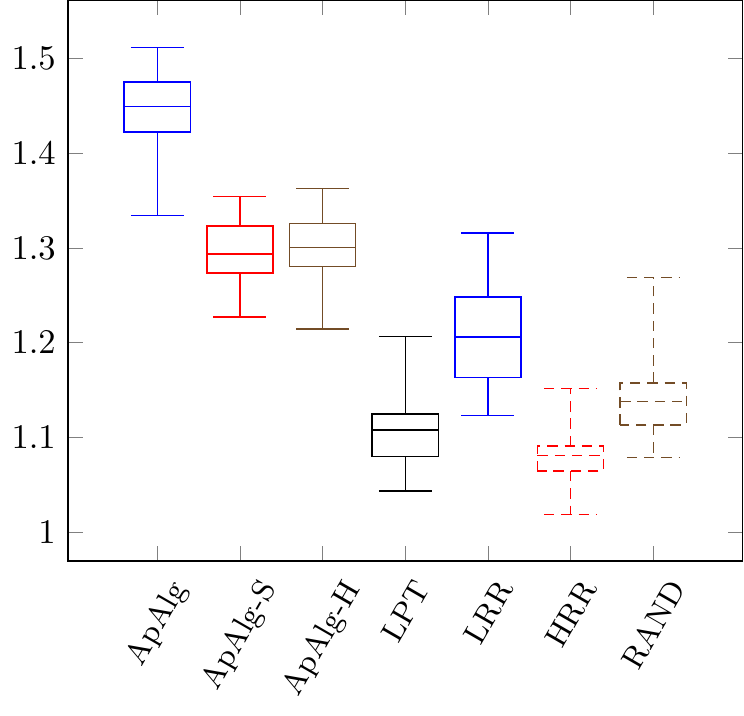}}
    \hfil
    \subfloat[$(5000, 10)$]{\includegraphics[width=0.24\textwidth]{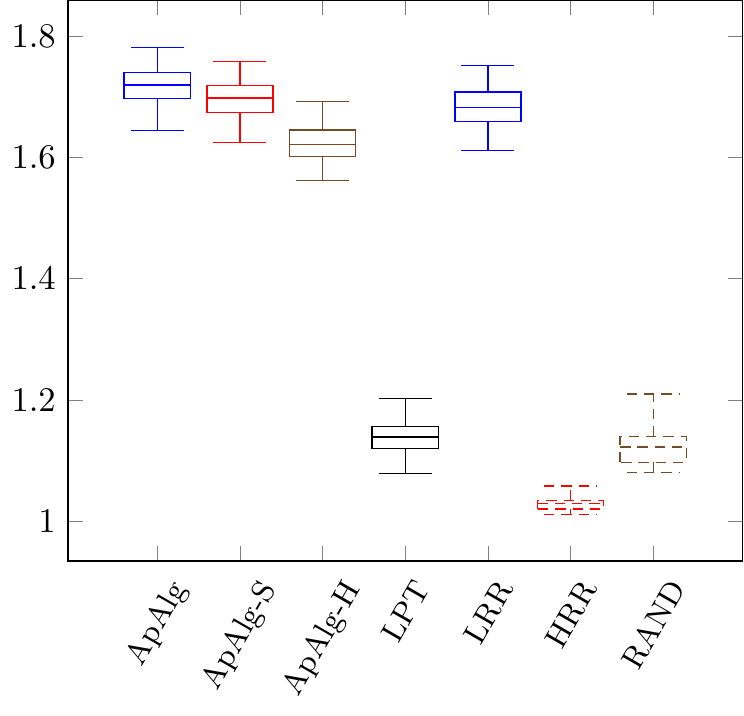}} 
    \subfloat[$(5000, 20)$]{\includegraphics[width=0.24\textwidth]{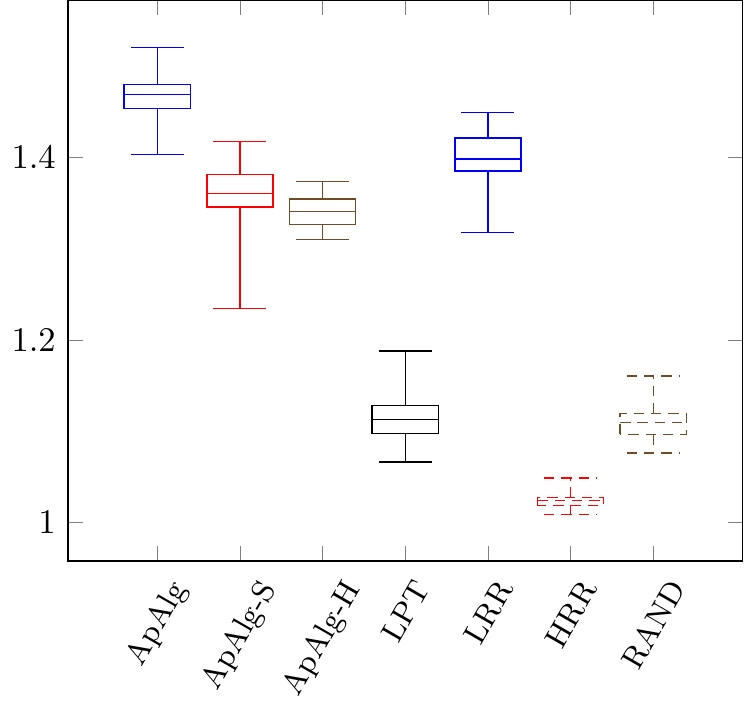}}
    \subfloat[$(5000, 50)$]{\includegraphics[width=0.24\textwidth]{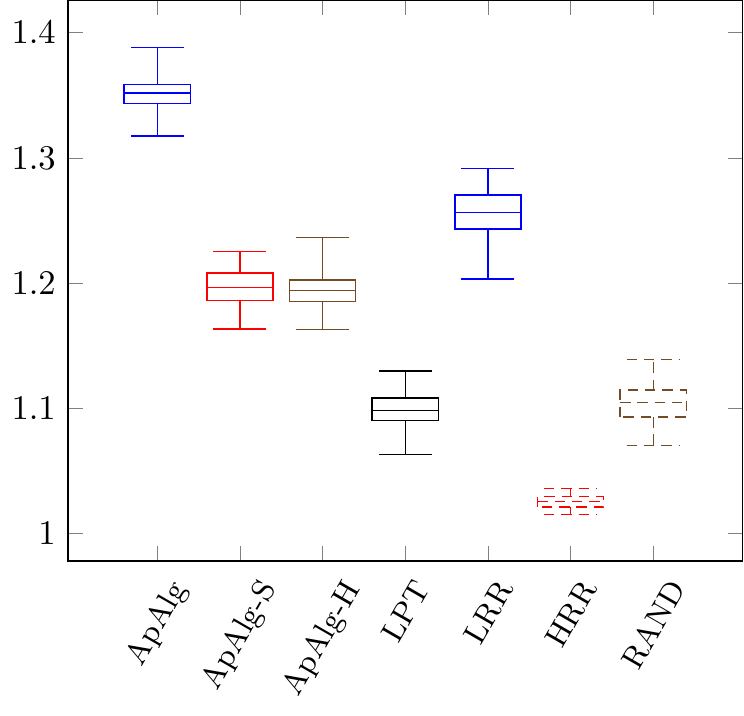}}
    \subfloat[$(5000, 100)$]{\includegraphics[width=0.24\textwidth]{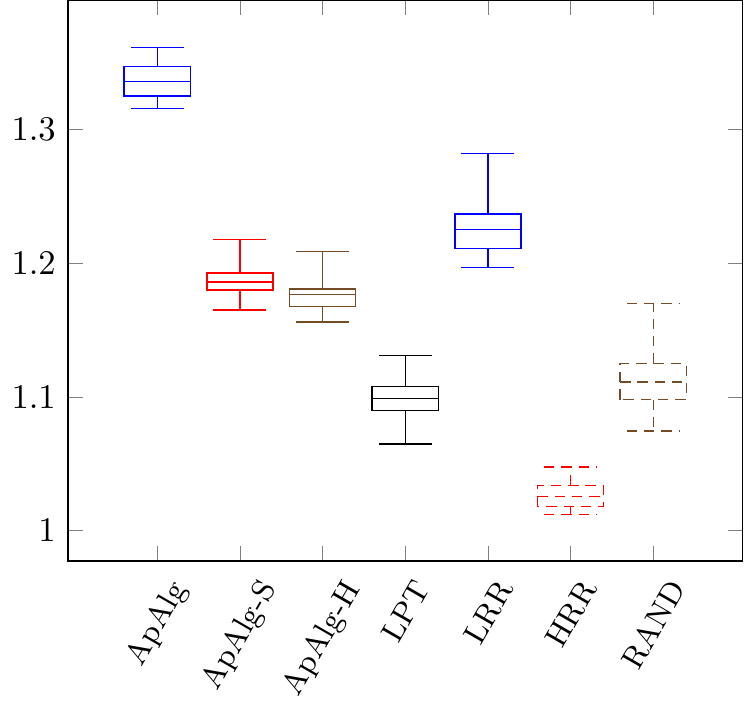}}
    \hfil
    \subfloat[$(10000, 10)$]{\includegraphics[width=0.24\textwidth]{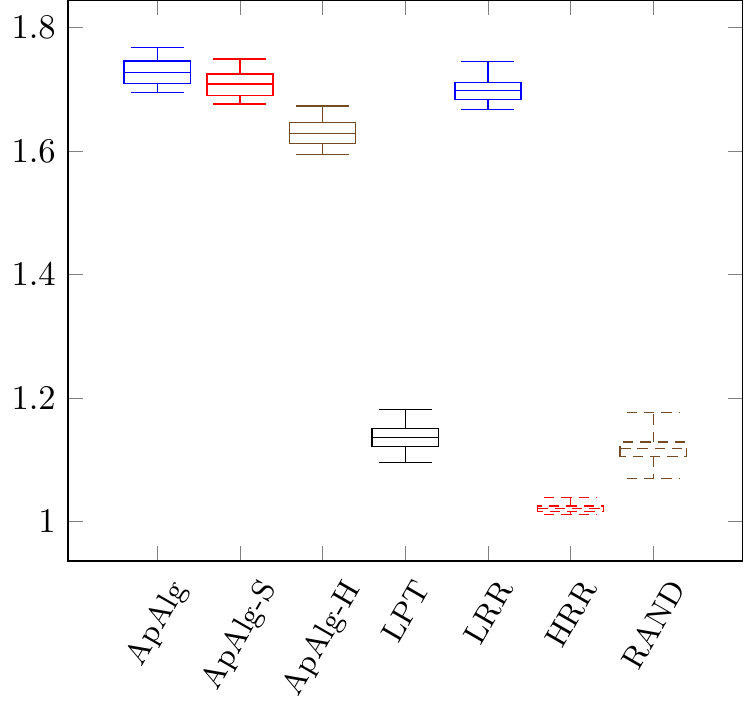}}
    \subfloat[$(10000, 20)$]{\includegraphics[width=0.24\textwidth]{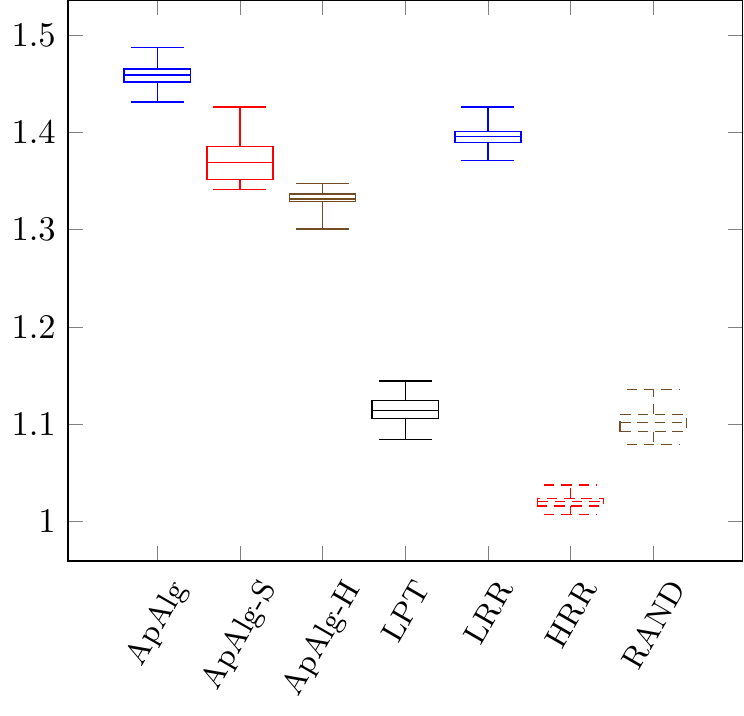}}
    \subfloat[$(10000, 50)$]{\includegraphics[width=0.24\textwidth]{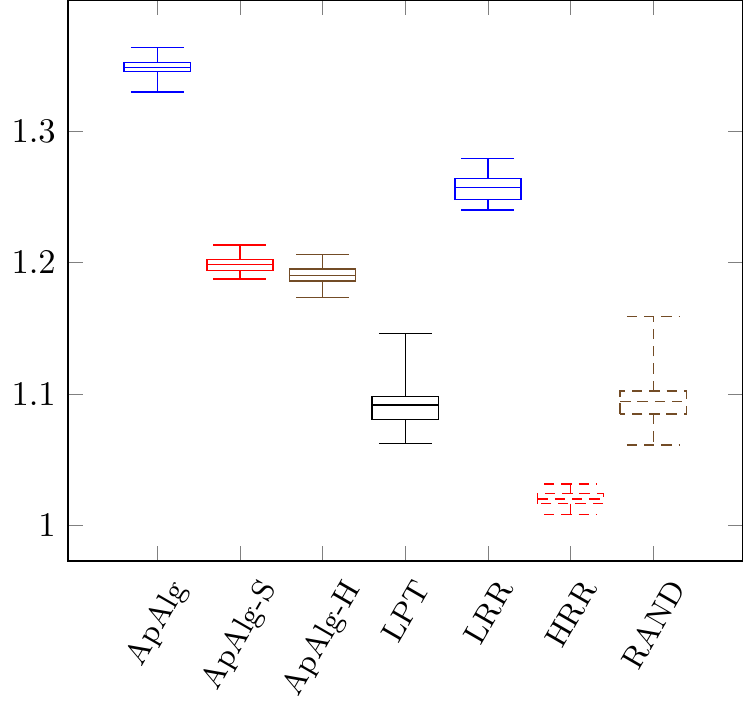}}
    \subfloat[$(10000, 100)$]{\includegraphics[width=0.24\textwidth]{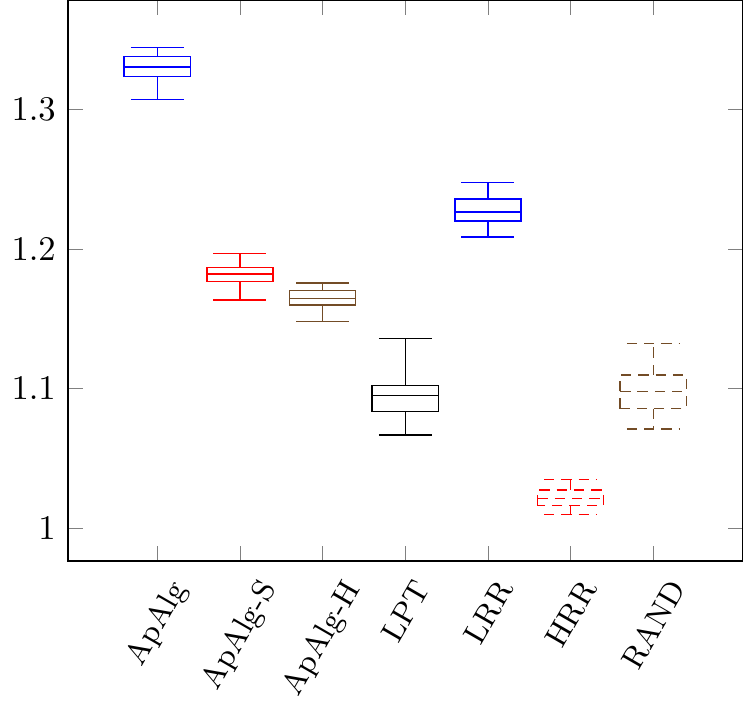}}
    \caption{The \cmax{} values normalized by the lower-bounds on the optimal schedule lengths. Captions $(n,m)$ describe the number of jobs $n$ and the number of machines $m$.}
    \label{fig:approxf}
\end{figure}

When the numbers of jobs and machines increase (right-bottom part of the figure), the normalized \cmax{} values decrease for all the algorithms. In such cases, a greedy heuristic, \emph{HRR}, provides almost optimal schedules, with a maximum normalized \cmax{} value of $1.03$. There might be two reasons for that. First, when the number of jobs increases and their processing times come from the same distribution, it is easier for greedy algorithms to provide better results (as the normalized \cmax{} value is relative). Second, if job resource requirements are not too small compared to the number of machines, the \emph{HRR} produces a schedule with the resource being almost fully-utilized most of the time, compared to \emph{LPT} which makes decisions solely based on the job length.

We also compared the runtime of the algorithms on large instances with $10000$ jobs and $100$ machines. As expected, log-linear \emph{ApAlg}, \emph{ApAlg-H}, \emph{LRR} and \emph{HRR} algorithms were significantly faster: their runtimes (median over $30$ instances) were equal to $2.83$s, $3.29$s, $0.74$s and $0.83$s, respectively --- in contrast to \emph{ApAlg-S}, \emph{LPT} and \emph{RAND} algorithms with runtimes of $88.39$s, $88.58$s and $92.94$s.

\section{Conclusions}
\label{sec:co}

In this paper, we presented a log-linear $2\frac56$-approximation algorithm for the parallel-machine scheduling problem with a single orthogonal resource and makespan as the objective function. Our algorithm improves the $\left(3 - \frac3m\right)$- approximation ratio of Garey and Graham  \cite{Garey1975}. It is also considerably easier to implement than the approximation algorithms proposed by Niemeier and Wiese \cite{NiemeierWiese2013} and Jansen, Maack and Rau \cite{Jansen2019}.

In the computational experiments, we compared three variants of our algorithm to four list scheduling heuristics. We used the real-life data provided by the MetaCentrum Czech National Grid. The results provided by the \emph{HRR} (Highest Resource Requirement) list heuristic significantly outperformed all other algorithms for the considered dataset. Although the results provided by the \emph{HRR} algorithm are promising, the approximation ratio can be improved in general. Thus, in order to provide the best results, one can combine the \emph{HRR} algorithm with our algorithms and thus obtain good schedules with a better guarantee on their approximation ratio.

\section*{Acknowledgements and Data Availability Statement}

This research is supported by a Polish National Science Center grant Opus (UMO-2017/25/B/ST6/00116). The authors would like to thank anonymous reviewers for their in-depth comments that helped to significantly improve the quality of the paper.

The datasets and code generated during and/or analyzed during the current study are available in the Figshare repository: \url{https://doi.org/10.6084/m9.figshare.14748267} \cite{Artifact}.

\bibliographystyle{splncs04}
\bibliography{bibliography}

\end{document}